\newtheorem{theorem}{Theorem}
\newtheorem{corollary}[theorem]{Corollary}
\newtheorem{lemma}[theorem]{Lemma}
\newtheorem{proposition}[theorem]{Proposition}
\theoremstyle{definition}
\newtheorem{remark}[theorem]{Remark}
\newtheorem{example}[theorem]{Example}
\newtheorem{construction}[theorem]{Construction}
\newtheorem{algorithm}[theorem]{Algorithm}
\newcommand{\B}{\mathcal{B}}
\newcommand{\BIBD}{\mathrm{BIBD}}
\newcommand{\PBD}{\mathrm{PBD}}
\newcommand{\supp}{\textrm{supp}}
\newcommand{\RIP}{\mathrm{RIP}}
\newcommand{\op}[1]{\|#1\|_{\mathrm{op}}}
\DeclareMathOperator{\PG}{\textnormal{PG}}
\title{\textbf{\Large{Compressed sensing with combinatorial designs: theory and simulations}}}
\author{
\textsc{Darryn Bryant}
				\thanks{\textit{E-mail: db@maths.uq.edu.au}}\\
\textit{\footnotesize{School of Mathematics and Physics,}}\\
\textit{\footnotesize{University of Queensland, QLD 4072, Australia.}}\\
\textsc{Charles J. Colbourn}
                \thanks{\textit{E-mail: charles.colbourn@asu.edu}}\\
\textit{\footnotesize{School of Computing, Informatics and Decision Systems Engineering,}}\\
\textit{\footnotesize{Arizona State University, Tempe 85287-8809, Arizona, U.S.A.}}\\
\textsc{Daniel Horsley}
\               \thanks{\textit{E-mail: daniel.horsley@monash.edu}}\\
\textit{\footnotesize{School of Mathematical Sciences,}}\\
\textit{\footnotesize{Monash University, VIC 3800, Australia.}}\\
\textsc{Padraig \'O Cath\'ain}
				\thanks{\textit{E-mail: p.ocathain@gmail.com}}\\
\textit{\footnotesize{School of Mathematical Sciences,}}\\
\textit{\footnotesize{Monash University, VIC 3800, Australia.}}\\
}
\begin{document}

\maketitle
\begin{center}
\begin{abstract}
\noindent
In \textit{An asymptotic result on compressed sensing matrices}, a new construction for compressed sensing matrices using combinatorial design theory was introduced. In this paper, we use deterministic and probabilistic methods to analyse the performance of matrices obtained from this construction.
We provide new theoretical results and detailed simulations.
These simulations indicate that the construction is competitive with Gaussian random matrices, and that recovery is tolerant to noise.
A new recovery algorithm tailored to the construction is also given.
\end{abstract}

\end{center}

\vspace{0.3cm}

\noindent
{\bf 2010 Mathematics Subject classification:} 05B05, 94A12, 94A15

\noindent
{\bf Keywords:} compressed sensing, combinatorial designs, signal recovery

\clearpage

\section{Overview}

In 2006 Donoho, Cand\`{e}s, Romberg and Tao
\cite{CandesRombergTaoRobustUncertainty, CandesRombergTaoStableRecovery,DonohoCS}
laid the foundations for a revolutionary new signal sampling paradigm, which is now called \textit{compressed sensing}. Their key insight was that many real-world signals have the special property of being \textit{sparse} -- they can be stored much more concisely than a random signal. Instead of sampling the whole signal and then applying data compression algorithms, they showed that sampling and compression of sparse signals can be achieved simultaneously. This process requires dramatically fewer measurements than the number dictated by traditional thinking, but requires complex measurements that are `incoherent' with respect to the signal.

Cand\`{e}s, Romberg and Tao established fundamental constraints for
sparse recovery; one cannot hope to recover $k$-sparse signals of length $N$ in less than $O(k\log N)$
measurements under any circumstances. They then established that several classes of random matrices
meet this bound asymptotically. Two important examples are the \textit{random Gaussian ensemble},
which has entries drawn from a standard normal distribution, and the \textit{random Fourier ensemble},
which consists of a random selection of rows from the discrete Fourier transform matrix \cite{CandesRombergTaoRobustUncertainty,CandesRombergTaoStableRecovery}.

In \cite{mypaper-PBD}, a construction for compressed sensing matrices based on block designs and complex Hadamard matrices
was introduced (see Construction \ref{con0} below).
Here we add to the analysis of these matrices, both establishing new results
on their compressed sensing performance, and providing details of extensive simulations. Our main results are the following.

\begin{enumerate}

    \item Theorem 10 of \cite{mypaper-PBD} establishes
that an $n\times N$ matrix created via Construction \ref{con0}
has the $(\ell_{1}, t)$-recovery property for all values of $t \leq \frac{\sqrt{n}}{4}$, where $n$ is the number of rows in the matrix. In Section \ref{performanceBoundsSec} we show that there exist vectors of sparsity at most $\sqrt{2n}$ which cannot be recovered. Recall that $\Phi$ has the $(\ell_{1}, t)$-recovery property if every $t$-sparse vector $m$ is uniquely recoverable from its image $\Phi m$ by $\ell_{1}$-minimisation.

    \item In Section \ref{heuristicsSec} we give a non-rigorous analysis of Construction \ref{con0} which suggests that sufficiently large matrices created via the construction allow recovery of most vectors of sparsity $O(\sqrt{n\log n})$.

    \item In Section \ref{simulationsSec} we provide detailed simulations which suggest that the recovery performance of matrices obtained from Construction \ref{con0} is comparable to that of the Gaussian ensemble for matrices with hundreds of rows and thousands of columns. Simulations also suggest that signal recovery is robust against uniform and burst noise. Algorithm running times are better than for Gaussian matrices.

    \item In Section \ref{NewAlgSec} we propose a new algorithm for signal recovery, tailored to Construction \ref{con0}. We show that for most vectors of sparsity at most $\sqrt{n}$, this algorithm runs in time $O(n\log n)$.
\end{enumerate}

\section{Preliminaries}

For our purposes here, a {\em Hadamard matrix of order $r$} is an $r \times r$ matrix $H$ such that each entry of $H$ is a complex number with magnitude 1 and $HH^*=rI_r$ (where $H^{\ast}$ is the conjugate transpose). The prototypical example of such a matrix is the character table of an abelian group; for cyclic groups this matrix is often referred to as a \textit{discrete Fourier transform matrix} or simply a \textit{Fourier matrix}.
When we restrict the entries to real numbers, we state explicitly that the matrix is a real Hadamard matrix. For background on Hadamard matrices, we refer the reader to \cite{HoradamHadamard}.

A {\em pairwise balanced design} $(V,\B)$ consists of a set $V$ of {\em points} and a collection $\B$ of subsets of $V$, called {\em blocks}, such that each pair of points is contained in exactly $\lambda$ blocks for some fixed positive integer $\lambda$. If $v=|V|$ and $K$ is a finite set of integers such that $|B| \in K$ for each $B \in \B$, we use the notation $\PBD(v,K,\lambda)$. We denote the maximum element of $K$ by $K_{\max}$ and the minimum element of $K$ by $K_{\min}$. For each point $x\in V$, the {\em replication number} $r_x$ of $x$ is defined by $r_x=|\{B\in\B:x\in B\}|$.
A set of points in a PBD is an {\em arc} if at most two of the points occur together in any block.

A $\PBD(v,\{k\},\lambda)$ is a {\em balanced incomplete block design}, denoted by $\BIBD(v,k,\lambda)$. Obviously, all points of a BIBD must have the same replication number. This paper is devoted to the study of matrices created via
Construction \ref{con0}, which generalises a construction from \cite{Mixon}.

\begin{construction}[\cite{mypaper-PBD}]\label{con0}
Let $(V,\B)$ be a $\PBD(v,K,1)$ with $n = |\B|$ and $N = \sum_{x \in V} r_{x}$.
Then $\Phi$ is the $n\times N$ matrix constructed as follows.
\begin{itemize}
    \item Let $A$ be the transpose of the incidence matrix of $(V,\B)$; rows of $A$
    are indexed by blocks, columns of $A$ by points, and the entry in row $B$ and
    column $x$ is $1$ if $x\in B$ and $0$ otherwise.
    \item For each $x \in V$, let $H_{x}$ be a (possibly complex) Hadamard matrix of order $r_{x}$.
    \item For each $x \in V$, column $x$ of $A$ determines $r_{x}$
    columns of $\Phi$; each zero in column $x$ is replaced with the $1 \times r_{x}$
    row vector $(0,0,\ldots,0)$, and each $1$ in column $x$ is replaced with a distinct row of $\frac{1}{\sqrt{r_{x}}}H_{x}$.
\end{itemize}
\end{construction}

\begin{remark}\label{growthRemark}
Suppose that an $n\times N$ matrix is created via Construction \ref{con0} using a $\PBD(v,K,1)$ $\mathcal{D}$ with replication numbers $r_1,\ldots,r_v$ and block sizes $k_1,\ldots,k_n$. We briefly discuss the behaviour of $n$ and $N$ in terms of the parameters of $\mathcal{D}$ in the case where $v$ is large and $K_{\min} \sim K_{\max}$, that is, there exists a constant $\alpha$ not depending on $v$ such that $K_{\max} \leq \alpha K_{\min}$. Let $r=\frac{1}{v}(r_1+\cdots+r_v)$ and $k=\frac{1}{n}(k_1+\cdots+k_n)$ be the average replication number and average block size respectively. Obviously $K_{\min},K_{\max} \sim k$, and standard counting arguments for block designs (see Chapter II of \cite{BJL} for example) yield
\begin{gather}
  N=rv=kn \label{exactN}\\
  kr \sim v \label{vlinsim}\\
  k^2n \sim v^2 \label{vquadsim}
\end{gather}
Combining \eqref{vlinsim} and \eqref{vquadsim}, we see that $n \sim r^2$. It is known (see \cite{deBruijnErdos}) that any non-trivial PBD has at least as many blocks as points. So $n \geq v$, and hence $k$ grows no faster than $O(\sqrt{v})$ by \eqref{vquadsim}. Two extremes can be distinguished. When $k$ is constant, $v \sim r \sim \sqrt{n}$ and $N \sim r^2 \sim n$ -- this is the case considered in \cite{mypaper-PBD}. When $k\sim\sqrt{v}$, $v \sim r^2 \sim n$ and $N \sim r^3 \sim n^{\frac{3}{2}}$ (this occurs when $\mathcal{D}$ is a projective plane, for example). It is easy to see that the growth rates of $v$ and $N$ are bounded by these two extremes.
\end{remark}

Except when $n \in \{1,2\}$, there do not exist real Hadamard matrices of order $n $ for $n \not\equiv 0 \pmod{4}$. If complex Hadamard matrices are employed in Construction \ref{con0},  the resulting compressed sensing matrix also has complex entries. If real matrices are required for a particular application (for example, linear programming solvers generally require the ordered property of the real numbers),  the isometry in Lemma \ref{BourgainTrick} can be exploited. While this result is elementary and well known, we first became aware of its use in this context in \cite{Bourgain}.

\begin{lemma}\label{BourgainTrick}
The map
\[a + ib \mapsto \left(\begin{array}{rr} a & b \\ -b & a \end{array}\right) \]
is an isometry from $\mathbb{C}$ to $M_{2}(\mathbb{R})$.
It extends to an isometry from $M_{n,N}(\mathbb{C})$ to $M_{2n,2N}(\mathbb{R})$.
\end{lemma}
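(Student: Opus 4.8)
The plan is to first pin down the norm: I regard $\mathbb{C}$, and more generally $M_{n,N}(\mathbb{C})$ and $M_{2n,2N}(\mathbb{R})$, as normed spaces under the operator (spectral) norm $\op{\cdot}$ (on $\mathbb{C}=M_{1,1}(\mathbb{C})$ this is just the modulus), which is the notion of distance relevant to compressed sensing. Write $\phi\colon\mathbb{C}\to M_2(\mathbb{R})$ for the map in the statement. First I would record its elementary algebraic properties: $\phi$ is $\mathbb{R}$-linear, $\phi(1)=I_2$ and $\phi(i)^2=-I_2$, so $\phi$ is an injective unital $\mathbb{R}$-algebra homomorphism, and moreover $\phi(\bar z)=\phi(z)^{\mathrm{T}}$. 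The scalar case is then immediate: $\phi(z)^{\mathrm{T}}\phi(z)=\phi(\bar z z)=\phi(|z|^2)=|z|^2 I_2$, so $\phi(z)$ has both singular values equal to $|z|$ and $\op{\phi(z)}=|z|$; hence $\phi$ is an isometry of $\mathbb{C}$ into $M_2(\mathbb{R})$.

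For the extension, let $M=(m_{jk})\in M_{n,N}(\mathbb{C})$ and let $\tilde M\in M_{2n,2N}(\mathbb{R})$ be the block matrix whose $(j,k)$ block is $\phi(m_{jk})$. Two observations drive the argument. First, since $\phi$ is an algebra homomorphism and block multiplication proceeds block by block, the assignment $M\mapsto\tilde M$ is multiplicative and respects the adjoint: $\widetilde{AB}=\tilde A\,\tilde B$ and $\widetilde{M^{*}}=\tilde M^{\mathrm{T}}$, the latter because transposing a block matrix transposes each block while $\phi(\overline{m_{kj}})=\phi(m_{kj})^{\mathrm{T}}$. Second, there is a natural $\mathbb{R}$-linear isometry $\iota\colon\mathbb{C}^{N}\to\mathbb{R}^{2N}$ sending $z=x+iy$ (with $x,y\in\mathbb{R}^{N}$) to the coordinate-interleaved vector $(x_1,y_1,\dots,x_N,y_N)^{\mathrm{T}}$; a one-line computation with the $2\times 2$ blocks gives $\tilde M\,\iota(z)=\iota(\overline M z)$, where $\overline M$ denotes the entrywise conjugate of $M$ (the conjugate appears because $\phi(z)$ represents multiplication by $\bar z$ on $\mathbb{R}^2\cong\mathbb{C}$).

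Combining these, I would finish in either of two equivalent ways. Operator-norm route: since $\iota$ is a bijective isometry and $z\mapsto\bar z$, $z\mapsto\overline M z=\overline{M\bar z}$ are norm-preserving reparametrisations,
\[
\op{\tilde M}=\sup_{z\neq 0}\frac{\|\tilde M\,\iota(z)\|_2}{\|\iota(z)\|_2}=\sup_{z\neq 0}\frac{\|M\bar z\|_2}{\|\bar z\|_2}=\op{M}.
\]
Spectral route: $\tilde M^{\mathrm{T}}\tilde M=\widetilde{M^{*}}\,\tilde M=\widetilde{M^{*}M}$, and for the Hermitian positive semidefinite matrix $H=M^{*}M$ one checks that $\widetilde H$ has exactly the eigenvalues of $H$, each with multiplicity doubled (an eigenvector $v$ of $H$ for a real eigenvalue $\lambda$ yields $\iota(\bar v)$ and $\iota(i\bar v)$ as eigenvectors of $\widetilde H$ for $\lambda$, and these span); taking the largest eigenvalue gives $\op{\tilde M}^2=\lambda_{\max}(\widetilde{M^{*}M})=\lambda_{\max}(M^{*}M)=\op{M}^2$.

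There is no serious obstacle here; the only thing requiring care is bookkeeping — fixing the identification $\mathbb{R}^{2N}\cong\mathbb{C}^{N}$ under which the block structure of $\tilde M$ corresponds to a genuine map on $\mathbb{C}^{N}$, and noting that this map is $z\mapsto\overline M z$ rather than $z\mapsto Mz$, which is harmless since entrywise conjugation is an isometry commuting with the operator norm. I would also remark, for the intended application, that $M\mapsto\tilde M$ together with $m\mapsto\iota(m)$ at most doubles sparsity, so recovery guarantees such as RIP bounds and the $(\ell_1,t)$-recovery property transfer from the complex to the real setting with $t$ replaced by $2t$.
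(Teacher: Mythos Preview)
The paper does not supply a proof of this lemma; it is stated as elementary and well known, with a citation to \cite{Bourgain} for its use in compressed sensing. Your argument is correct and is a perfectly good way to fill the gap: the identification of $\phi$ as a unital $\mathbb{R}$-algebra homomorphism with $\phi(\bar z)=\phi(z)^{\mathrm{T}}$, the intertwining relation $\tilde M\,\iota(z)=\iota(\overline{M}z)$, and either the direct operator-norm computation or the spectral argument all go through as you describe. Your choice of the operator norm is the right one for the statement to be literally true (under the Frobenius norm one picks up a factor of $\sqrt{2}$), and it is the norm the paper itself singles out later.

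One caution about your closing remark. You write that RIP and $(\ell_1,t)$-recovery ``transfer from the complex to the real setting with $t$ replaced by $2t$''. This overstates what the isometry buys. A $t$-sparse complex vector $m$ becomes a $2t$-sparse real vector $\iota(m)$, and recovery of $m$ by $\Phi_{\mathbb{C}}$ is equivalent to recovery of $\iota(m)$ by $\Phi_{\mathbb{R}}$; but this does \emph{not} give $(\ell_1,2t)$-recovery for $\Phi_{\mathbb{R}}$ in the usual sense, since not every $2t$-sparse real vector lies in the image of $\iota$. The paper is explicit on this point: ``while a $t$-sparse complex vector is in general $2t$-sparse, we cannot, and do not claim to, recover arbitrary $2t$-sparse real vectors.'' You should rephrase that sentence accordingly.
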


This technique allows a complex compressed sensing matrix to be converted into a real one,
at the expense of doubling its dimensions. If a complex matrix recovers $t$-sparse vectors,
so too does the corresponding real matrix. (But while a $t$-sparse complex vector
is in general $2t$-sparse, we cannot, and do not claim to, recover arbitrary $2t$-sparse real vectors.)
Where confusion may arise, we specify the field over which our matrix is defined as a subscript;
thus if $\Phi_{\mathbb{C}}$ is $n\times N$, $\Phi_{\mathbb{R}}$ is $2n \times 2N$.

We  require a result concerning the sparsity of linear
combinations of rows of a complex Hadamard matrix, which is essentially
an uncertainty principle bounding how well the standard normal basis can
be approximated by the basis given by the (scaled) columns of a Hadamard matrix.
In particular, if $m \in \mathbb{C}^{n}$ admits an expression as a linear combination
of $u$ columns of a Hadamard matrix, and as a linear combination of $d$ standard basis
vectors, then $du \geq n$.

\begin{lemma}[\cite{mypaper-switching}, Lemma 2, cf. \cite{Jukna}, Lemma 14.6]\label{rowbound}
Let $H$ be a complex Hadamard matrix of order $n$. If $m$ is a non-zero linear combination of at most $u$ columns of $H$, then $m$ has at least
$\lceil \frac{n}{u}\rceil$ non-zero entries.
\end{lemma}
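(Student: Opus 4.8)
The plan is to write $m = Hc$ where $c \in \mathbb{C}^n$ has at most $u$ nonzero entries --- one coefficient for each column of $H$ occurring in the linear combination --- and then to play the Euclidean norm of $m$ against its $\ell_\infty$-norm. One estimate will use that $H$ is a scaled isometry, the other that every entry of $H$ has modulus $1$; together they yield an uncertainty principle of exactly the required shape, as anticipated in the discussion preceding the lemma.

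For the first estimate, the relation $H^*H = nI_n$ (equivalent to the stated $HH^* = nI_n$) gives $\|m\|_2^2 = m^*m = c^*H^*Hc = n\,c^*c = n\|c\|_2^2$; this is the only place orthogonality of the columns is used. For the second, fix a coordinate $i$. Since $|H_{ij}| = 1$ for every $j$, we have $|m_i| = \bigl|\sum_j H_{ij}c_j\bigr| \le \sum_j |c_j| = \|c\|_1$, and since $c$ is supported on a set of size at most $u$, Cauchy--Schwarz yields $\|c\|_1 \le \sqrt{u}\,\|c\|_2$. Hence $\|m\|_\infty^2 \le u\|c\|_2^2 = \tfrac{u}{n}\|m\|_2^2$.

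To finish, let $d$ be the number of nonzero entries of $m$ and combine $\|m\|_2^2 \le d\,\|m\|_\infty^2$ with the bound just obtained to get $\|m\|_2^2 \le \tfrac{du}{n}\,\|m\|_2^2$. As $m \ne 0$, cancelling $\|m\|_2^2 > 0$ gives $du \ge n$, and since $d$ is an integer this is exactly $d \ge \lceil n/u\rceil$.

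I do not expect a genuine obstacle here; the only care needed is the bookkeeping that ``a linear combination of at most $u$ columns'' means precisely that the coefficient vector $c$ has at most $u$ nonzero coordinates, and keeping the conjugate transposes straight in the first estimate. (The same idea admits a spectral packaging: writing $S = \supp(m)$ and $T = \supp(c)$, the identity $c = \tfrac1n H^*m$ shows that $n$ is an eigenvalue of the $d \times d$ positive semidefinite matrix $H_{S,T}H_{S,T}^*$, whose trace equals $d\,|T| \le du$, so again $n \le du$; but the norm computation above seems the cleanest route.)
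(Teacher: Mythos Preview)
Your argument is correct: writing $m=Hc$ with $c$ supported on at most $u$ coordinates, the isometry relation $H^*H=nI_n$ gives $\|m\|_2^2=n\|c\|_2^2$, while the unimodularity of the entries together with Cauchy--Schwarz on $\supp(c)$ gives $\|m\|_\infty^2\le u\|c\|_2^2$; combining with $\|m\|_2^2\le d\,\|m\|_\infty^2$ yields $du\ge n$ and hence $d\ge\lceil n/u\rceil$. The only point worth a word is that $H^*H=nI_n$ indeed follows from the stated $HH^*=nI_n$ for square $H$, since $\tfrac{1}{\sqrt{n}}H$ is then unitary; you note this equivalence, which is fine.

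As for comparison: the present paper does not prove the lemma at all---it is quoted from \cite{mypaper-switching} (with a pointer to Jukna's book) and used as a black box. Your proof is exactly the classical Donoho--Stark/Elad--Bruckstein uncertainty argument, which is the standard route and almost certainly what the cited references do; the spectral variant you sketch at the end is an equally clean alternative. Nothing to change.
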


The bound in Lemma \ref{rowbound} is sharp. Let $H$ be a Fourier matrix of order $v$,
and suppose that $u$ divides $v$. Then there exist $u$ columns of $H$ containing
only $u^{\textrm{th}}$ roots of unity and the sum of these columns vanishes on all
but $\frac{v}{u}$ coordinates. At the other extreme,  a non-trivial linear combination of any $u$ of the columns of  a Fourier matrix of prime order vanishes on at most $u-1$ coordinates.

\section{Upper bounds on performance}\label{performanceBoundsSec}

The \textit{spark} of a matrix $\Phi$ is the smallest non-zero value of $s$ such that
there exists a vector $m$ of sparsity $s$ in the nullspace of $\Phi$. In this section
we provide upper and lower bounds on the spark of a matrix obtained from
Construction \ref{con0}.

The following well-known result bounds the  $(\ell_{1}, t)$-recoverability of a matrix in terms of its spark.

\begin{proposition}\label{sparkRIP}
If the spark of a matrix $\Phi$ is $s$ and $\Phi$ has $(\ell_{1}, t)$-recoverability,
then $t < \frac{s}{2}$.
\end{proposition}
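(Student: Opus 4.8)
The plan is to prove the contrapositive, or rather to argue directly that if $\Phi$ has a null-space vector of sparsity $s$, then there is a $t$-sparse vector that fails to be the unique $\ell_1$-minimiser whenever $t \geq s/2$. So suppose $m \neq 0$ lies in the null space of $\Phi$ with $\supp(m)$ of size $s$. Split the support into two disjoint sets $S_1, S_2$ with $|S_1| = \lceil s/2 \rceil$ and $|S_2| = \lfloor s/2 \rfloor$, and write $m = m_1 - m_2$ where $m_1$ is supported on $S_1$ (agreeing with $m$ there) and $m_2$ is supported on $S_2$ (equal to $-m$ there). Both $m_1$ and $m_2$ have sparsity at most $\lceil s/2 \rceil$.

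The key observation is that $\Phi m_1 = \Phi m_2$, since $\Phi(m_1 - m_2) = \Phi m = 0$. Thus $m_1$ and $m_2$ are two distinct vectors (distinct because their supports are disjoint and $m \neq 0$, so both are nonzero) with the same image under $\Phi$. Now if $\Phi$ had the $(\ell_1, t)$-recovery property for some $t \geq \lceil s/2 \rceil$, then $m_1$, being $t$-sparse, would be the unique solution of $\min \|x\|_1$ subject to $\Phi x = \Phi m_1$; in particular $\|m_1\|_1 < \|m_2\|_1$. But by the same token $m_2$ is $t$-sparse and shares the image $\Phi m_2 = \Phi m_1$, forcing $\|m_2\|_1 < \|m_1\|_1$, a contradiction. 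Hence $\Phi$ cannot have the $(\ell_1, t)$-recovery property for any $t \geq \lceil s/2 \rceil$, which gives $t < \lceil s/2 \rceil$, and in particular $t < \frac{s}{2}$ when $s$ is even; a small amount of care with the ceiling (or simply observing $t \leq \lceil s/2\rceil - 1 < s/2$ fails only when $s$ is odd, where one gets $t \leq (s-1)/2 < s/2$ regardless) yields the stated bound $t < \frac{s}{2}$ in all cases.

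The only real subtlety is the rounding: the clean statement $t < s/2$ needs the observation that $t$ is an integer and that the partition can always be chosen so that the larger half has size $\lceil s/2 \rceil$, giving $t \leq \lceil s/2\rceil - 1$. For even $s$ this is $t \leq s/2 - 1 < s/2$; for odd $s$ this is $t \leq (s+1)/2 - 1 = (s-1)/2 < s/2$. So in every case $t < s/2$, as claimed. There is no genuine obstacle here — the argument is the standard one relating recoverability to the spark — so the main thing to be careful about is simply making the split of the support explicit and checking both vectors are genuinely $t$-sparse and genuinely distinct.
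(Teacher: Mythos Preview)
Your argument is correct and follows essentially the same approach as the paper: split a minimal null-space vector into two halves of sparsity $\lfloor s/2\rfloor$ and $\lceil s/2\rceil$, observe they share the same image under $\Phi$, and conclude that $(\ell_1,\lceil s/2\rceil)$-recoverability fails. The only cosmetic difference is that the paper simply notes one of the two halves must fail to be recovered, whereas you run the symmetric $\|m_1\|_1 < \|m_2\|_1 < \|m_1\|_1$ contradiction; both reach the same conclusion.
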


\begin{proof}
Let $\Phi$ be a matrix with spark $s$ and let $m$ be an element of sparsity $s$ in
the nullspace of $\Phi$. Write $m = m_{1} + m_{2}$ where $m_{1}$ has sparsity
$\lfloor \frac{s}{2}\rfloor$ and $m_{2}$ has sparsity $\lceil \frac{s}{2}\rceil$.
Then
\[ \Phi m_{1} + \Phi m_{2} = 0. \]
So $\Phi(m_{2}) = \Phi(-m_{1})$, and one of $-m_1$ or $m_{2}$ is not recoverable.
Thus $\Phi$ does not have $(\ell_{1}, \lceil \frac{s}{2}\rceil)$-recoverability and the result follows.
\end{proof}

We can now provide upper and lower bounds on the spark of $\Phi$.

\begin{proposition}\label{lowerbound}
Let $\mathcal{D}$ be a $\PBD(v,K,1)$ whose smallest replication number is $r_1$ and let $\Phi$ be a matrix obtained from Construction \ref{con0} using $\mathcal{D}$. Then the spark $s$ of $\Phi$ satisfies $r_{1} \leq s$.
\end{proposition}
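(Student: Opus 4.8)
The plan is to take an arbitrary non-zero $m$ in the nullspace of $\Phi$ and show that $|\supp(m)|\geq r_1$ (in fact $|\supp(m)|>r_1$). First I would record the block structure of $\Phi$: its columns fall into groups indexed by the points $x\in V$, the group for $x$ consisting of $r_x$ columns that vanish outside the $r_x$ rows indexed by the blocks through $x$ and that, restricted to those rows, form $\tfrac{1}{\sqrt{r_x}}H_x$. Writing $m=(m_x)_{x\in V}$ with $m_x\in\mathbb{C}^{r_x}$ accordingly, the contribution of the group for $x$ to $\Phi m$ is the vector $w_x:=\tfrac{1}{\sqrt{r_x}}H_x m_x$ supported on the blocks through $x$, and $\Phi m=0$ says exactly that $\sum_{x\in B}(w_x)_B=0$ for every block $B$. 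Put $T=\{x\in V:m_x\neq 0\}$ and $u_x=|\supp(m_x)|$ for $x\in T$, so $|\supp(m)|=\sum_{x\in T}u_x$, and note $|T|\geq 2$: if $T=\{x\}$ then $\Phi m=w_x\neq 0$ because $H_x$ is invertible.

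Next I would extract one local bound and one global bound. Locally, $w_x$ is a non-zero linear combination of $u_x$ columns of the Hadamard matrix $H_x$, so Lemma~\ref{rowbound} gives $|\supp(w_x)|\geq\lceil r_x/u_x\rceil\geq r_1/u_x$. Globally, for each block $B$ let $a_B$ be the number of $x\in B\cap T$ with $(w_x)_B\neq 0$; since $\sum_{x\in B\cap T}(w_x)_B=0$, a single non-zero summand is impossible, so $a_B\in\{0,2,3,\dots\}$ for every $B$. Counting the support of the $w_x$ block by block gives $\sum_{x\in T}|\supp(w_x)|=\sum_B a_B$. Because $\lambda=1$, any two points of $T$ lie in at most one common block, whence $\sum_B\binom{a_B}{2}\leq\binom{|T|}{2}$; since $a_B\leq 2\binom{a_B}{2}$ (as $a_B=0$ or $a_B\geq 2$), this gives $\sum_B a_B\leq|T|(|T|-1)$.

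Combining the two bounds with $t=|T|\geq 2$,
\[
r_1\sum_{x\in T}\frac{1}{u_x}\;\leq\;\sum_{x\in T}|\supp(w_x)|\;=\;\sum_B a_B\;\leq\;t(t-1)\;<\;t^2,
\]
so $\sum_{x\in T}1/u_x<t^2/r_1$; the AM--HM inequality applied to the $t$ positive numbers $(u_x)_{x\in T}$ then yields
\[
|\supp(m)|\;=\;\sum_{x\in T}u_x\;\geq\;\frac{t^2}{\sum_{x\in T}1/u_x}\;>\;r_1 .
\]

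I expect the main obstacle to be obtaining the full factor $r_1$ rather than only $\sqrt{r_1}$: following a single active point $x$ and bounding $\lceil r_x/u_x\rceil$ by the number of other active points meeting $x$ gives only $|\supp(m)|\gtrsim\sqrt{r_1}$, and one must instead average the local estimate over all of $T$ against the global count $\sum_B a_B\leq t(t-1)$. Within that step the two points requiring care are checking that $a_B=1$ is genuinely ruled out (including for blocks meeting $T$ in just one point) and the double-counting inequality $\sum_B\binom{a_B}{2}\leq\binom{t}{2}$ coming from $\lambda=1$.
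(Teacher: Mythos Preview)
Your proof is correct and reaches the same final bound $|\supp(m)|\geq tr_1/(t-1)>r_1$ as the paper, but it takes a more roundabout route. The paper uses the same decomposition $m=(m_i)_{i\in V}$ and the same $T$, but instead of your global count $\sum_B a_B\leq t(t-1)$ it applies the $\lambda=1$ condition \emph{pointwise}: for each $i\in T$, every nonzero coordinate of $\Phi m_i$ must be cancelled by some $\Phi m_j$ with $j\in T\setminus\{i\}$, and since any two points share at most one block, each of the $t-1$ other points can cover at most one such coordinate, so $|\supp(\Phi m_i)|\leq t-1$ directly. Lemma~\ref{rowbound} then gives $u_i\geq r_i/(t-1)\geq r_1/(t-1)$, and summing over $i\in T$ yields $|\supp(m)|\geq tr_1/(t-1)>r_1$ with no need for the $\binom{a_B}{2}$ double count or the AM--HM step.

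Your worry that following a single active point gives only $\sqrt{r_1}$ is a misreading of the pointwise strategy: it holds for \emph{every} $i\in T$ simultaneously, and summing the $t$ resulting inequalities already gives the full factor $r_1$. Your approach is not wrong---indeed your global inequality $\sum_i|\supp(w_i)|\leq t(t-1)$ is strictly weaker information than the per-point bound $|\supp(w_i)|\leq t-1$, yet the AM--HM inequality recovers exactly the same conclusion---it is simply more machinery than the problem needs.
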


\begin{proof}
Suppose that $m$ is in the nullspace of $\Phi$.
For each $i \in  \{1,\ldots,v\}$, let $m_i$ be the vector
which is equal to $m$ on those coordinates corresponding to
point $i$ of $\mathcal{D}$ and has each other coordinate equal to 0.
Let $T=\{i:\supp(m_i) \neq \emptyset\}$, and $t=|T|$. Now,
$\sum_{i \in T}\Phi m_i =\Phi m=0$ and, because any two points
of $\mathcal{D}$ occur together in exactly one block,
$|\supp(\Phi m_i) \cap  \supp(\Phi m_j)| \leq 1$ for any distinct
$i,j \in T$. Thus it must be that $|\supp(\Phi m_i)| \leq t-1$ for each $i \in T$.
By Lemma \ref{rowbound}, this implies that $|\supp(m_i)| \geq \frac{r_i}{t-1}$
for each $i \in T$. So, because $r_i \geq r_1$ for each $i \in T$, we have that
$|\supp(m)| = \sum_{i \in T}|\supp(m_i)| \geq \frac{tr_1}{t-1} > r_1$. \qedhere
\end{proof}

\begin{proposition}\label{upperbound}
Let $\mathcal{D}$ be a $\PBD(v,K,1)$ whose two smallest replication numbers are $r_1$ and $r_2$ (possibly $r_1=r_2$) and let $\Phi$ be a matrix obtained from Construction \ref{con0} using $\mathcal{D}$. Then the spark $s$ of $\Phi$ satisfies $s \leq r_{1} + r_{2}$.
\end{proposition}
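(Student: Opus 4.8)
The plan is to construct explicitly a non-zero vector $m$ in the nullspace of $\Phi$ with $|\supp(m)| = r_1 + r_2$; the bound $s \leq r_1 + r_2$ then follows at once. By hypothesis $\mathcal{D}$ has two distinct points $x_1$ and $x_2$ with replication numbers $r_1$ and $r_2$ respectively, and since $\mathcal{D}$ is a $\PBD(v,K,1)$ these points lie together in a unique block $B$. Let $e \in \C^{n}$ be the standard basis vector indexed by the row of $\Phi$ corresponding to $B$.

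First I would record the shape of the columns of $\Phi$ coming from a single point. By construction, the $r_{x}$ columns associated with a point $x$ vanish in every row except the $r_x$ rows indexed by the blocks through $x$, and in those rows they form the invertible matrix $\frac{1}{\sqrt{r_x}}H_x$ (invertibility from $H_xH_x^* = r_xI_{r_x}$, with inverse $\frac{1}{\sqrt{r_x}}H_x^*$). Applying this to $x_1$: since $B$ is one of the blocks through $x_1$, the restriction of $e$ to those $r_1$ rows is a coordinate vector, so I can solve for a coefficient vector $c_1 \in \C^{r_1}$ such that the corresponding linear combination of the $x_1$-columns equals $e$; explicitly $c_1$ is $\frac{1}{\sqrt{r_1}}H_{x_1}^*$ applied to that coordinate vector, so every entry of $c_1$ has modulus $\frac{1}{\sqrt{r_1}}$ and in particular $c_1$ has no zero entry. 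Likewise, using that $B$ is also a block through $x_2$, choose $c_2 \in \C^{r_2}$, again with no zero entry, so that the corresponding linear combination of the $x_2$-columns equals $-e$.

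Now set $m$ equal to $c_1$ on the coordinates corresponding to $x_1$, equal to $c_2$ on the coordinates corresponding to $x_2$, and zero on all other coordinates. Then $\Phi m = e + (-e) = 0$, while $m \neq 0$ and, because the two coordinate blocks are disjoint and $c_1$, $c_2$ are fully supported, $|\supp(m)| = r_1 + r_2$. Hence $s \leq r_1 + r_2$. I do not expect any real obstacle: the argument relies only on the $\lambda = 1$ property, which forces the column sets of $x_1$ and $x_2$ to overlap in the single row $B$, together with the fact that an invertible Hadamard matrix realises any coordinate vector as a linear combination of its columns. The one point needing a little care is that the Hadamard entries are unimodular, hence nonzero, so that $c_1$ and $c_2$ are genuinely fully supported and $\supp(m)$ has size exactly $r_1 + r_2$ rather than less.
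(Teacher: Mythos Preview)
Your proof is correct and is essentially the same construction as the paper's: both pick two points $x_1,x_2$ with minimal replication numbers, use the unique common block $B$, and produce a fully supported null vector on the $r_1+r_2$ columns belonging to those two points. The only cosmetic difference is that the paper scales columns so that row $B$ becomes constant and then invokes row orthogonality to see that all other row-sums vanish, whereas you phrase the same step as applying $\frac{1}{\sqrt{r_i}}H_{x_i}^{*}$ to a coordinate vector; these yield the same coefficient vector and the same null vector.
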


\begin{proof}
Let $p_{1}$ and $p_{2}$ be points of $\mathcal{D}$ with replication numbers $r_{1}$ and $r_{2}$ respectively, and let $\Phi_1$ and $\Phi_2$ be the submatrices of $\Phi$ consisting of the columns corresponding to $p_1$ and $p_2$ respectively. We  establish the result by showing that the columns of the block matrix $[\Phi_1\ \Phi_2]$ are linearly dependent.

There is a unique row $\rho$ of the matrix $[\Phi_1\ \Phi_2]$ that contains no zero entries (corresponding to the unique block that contains $p_1$ and $p_2$). For $i \in \{1,2\}$, let $\Phi'_i$ be the matrix obtained by scaling the columns of $\Phi_i$ so that all entries in $\rho$ have the value $\frac{1}{r_{1}}$ if $i=1$ and $-\frac{1}{r_{2}}$ if $i=2$. For $i \in \{1,2\}$, because the rows of $\Phi_i$ are pairwise orthogonal, the rows of $\Phi'_i$ are too. In particular, row $t$ of $\Phi_i$ is orthogonal to each other non-zero row of $\Phi_i$ and thus the sum of any row of $\Phi'_i$ other than row $t$ is zero. Clearly, the sum of row $\rho$ of $\Phi_1$ is $1$ and the sum of row $\rho$ of $\Phi_2$ is $-1$. Thus the columns of $[\Phi'_1\ \Phi'_2]$ add to the zero vector, completing the proof.
\end{proof}

The bound of Proposition \ref{upperbound} is sharp. Let $\mathcal{D}$ be a $\PBD(v,K,1)$ with every replication number prime, and let $\Phi$ be a matrix obtained from Construction \ref{con0} taking $(V,\mathcal{B})$ to be $\mathcal{D}$ and $H_x$ to be a Fourier matrix of order $r_x$ for each $x \in V$. Using the fact that a non-trivial linear combination of $u$ columns of a prime order Fourier matrix vanishes on at most $u-1$ coordinates, it can be shown that the spark of $\Phi$ is exactly $r_1+r_2$.
We say that a Hadamard matrix is \textit{optimal} if no linear combination of $k$ rows has more than $k$ zero entries. By this criterion the Fourier matrices of prime order are optimal. The next remark shows that, in general, the spark of $\Phi$ depends both on the choices of the design and the Hadamard matrices used in Construction \ref{con0}.

\begin{remark}\label{counter}
Let $\mathcal{D}$ be a $\PBD(v,K,1)$ with every replication number equal to some
$r \equiv 0 \pmod{4}$, and let $\Phi$ be a matrix obtained from Construction \ref{con0}
taking $(V,\mathcal{B})$ to be $\mathcal{D}$ and $H_x$ to be a real Hadamard matrix
of order $r$ for each $x \in V$. Denote by $\textbf{1}$ a vector of 1s of length
$\frac{r}{2}$. Consider the submatrix $\Phi'$ of $\Phi$ consisting of the columns
corresponding to three points of an arc of $\mathcal{D}$. Then by scaling
and reordering columns of $\Phi'$ we can obtain a matrix $\Phi''$ that has a submatrix of the form

\[ \left(\begin{array}{rr|rr|rr}
\textbf{1} & \textbf{1} & \textbf{1} & \textbf{1} & \textbf{0}& \textbf{0}\\
\textbf{1} & \textbf{-1} & \textbf{0} & \textbf{0} & \textbf{1}& \textbf{1}\\
\textbf{0} & \textbf{0} & \textbf{1} & \textbf{-1} & \textbf{1}& \textbf{-1}\\
\end{array} \right). \]

The vector $m = (\textbf{0}, \textbf{1}, \textbf{0}, \textbf{-1}, \textbf{0}, \textbf{1})^{\top}$ is in the nullspace of $\Phi''$. This is easily verified for the displayed rows, and follows for the remaining rows from the orthogonality of Hadamard matrices (the vector $m$ is a linear combination of the displayed rows of the Hadamard matrix when restricted to any set of columns corresponding to a point). It follows that the nullspace of $\Phi$ contains elements of sparsity $\frac{3r}{2}$.
\end{remark}

Real Hadamard matrices are never optimal, by Remark \ref{counter}. The obvious questions here concern the existence of optimal or near optimal complex Hadamard matrices; some discussion is contained in \cite{mypaper-switching}.

Now we combine our results so far to determine an upper bound
for the $(\ell_{1}, t)$-recoverability of $\Phi$. Recall that MIP-based results are sufficient conditions for
$(\ell_{1}, t)$-recoverability, but are not necessary in general. Thus, while the Welch bound forms a fundamental obstacle to using MIP to prove that vectors of sparsity exceeding $\frac{\sqrt{n}}{2}$ sampled by an $n \times N$ matrix can be recovered,
it does \textbf{not} guarantee that such vectors cannot be recovered. We show  that there exist vectors of
sparsity at most $\sqrt{2n}$ which cannot be recovered by matrices obtained via Construction \ref{con0}.

\begin{proposition}\label{CSUpperBound}
Let $\mathcal{D}$ be a $\PBD(v,K,1)$ such that $K_{\max} \leq \sqrt{2}(K_{\min}-1)$ and the
two smallest replication numbers of $\mathcal{D}$ are $r_1$ and $r_2$ (possibly $r_1=r_2$).
Let $\Phi$ be an $n\times N$ matrix obtained from Construction \ref{con0} using $\mathcal{D}$.
If $\Phi$ has $(\ell_{1}, t)$-recoverability, then $t < \sqrt{2n}$.
\end{proposition}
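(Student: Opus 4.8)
The plan is to chain together the three results already established: Proposition~\ref{upperbound} gives an upper bound $s \leq r_1 + r_2$ on the spark, Proposition~\ref{sparkRIP} converts a spark bound into an upper bound on $(\ell_1,t)$-recoverability, and the growth estimates of Remark~\ref{growthRemark} let us re-express $r_1+r_2$ in terms of $n$. Concretely, if $\Phi$ has $(\ell_1,t)$-recoverability then $t < \frac{s}{2} \leq \frac{r_1+r_2}{2}$, so it suffices to show $r_1 + r_2 \leq 2\sqrt{2n}$, i.e.\ $(r_1+r_2)^2 \leq 8n$. Since $r_1$ and $r_2$ are the two \emph{smallest} replication numbers, $r_1 \leq r_2 \leq r_x$ for all remaining points, so $r_1+r_2 \leq 2r_x$ is not quite what we want; instead I would bound $r_1+r_2 \leq 2 r_{\min}$ where $r_{\min}$ is the smallest replication number, and then control $r_{\min}$.

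The key computation is to bound the smallest replication number in terms of $n$ and the block sizes. Fix a point $p$ of $\mathcal{D}$ with replication number $r_p = r_{\min}$. The blocks through $p$ partition the other $v-1$ points (since $\lambda = 1$), and each such block has at most $K_{\max}-1$ points other than $p$, so $v - 1 \leq r_p (K_{\max}-1)$. On the other hand, counting flags (point–block incidences) gives $\sum_{x} r_x = \sum_{B} |B| \geq n K_{\min}$, and also every block has at most $K_{\max}$ points, so $n K_{\min} \leq N = \sum_x r_x$. To relate $n$ to $v$, use that every pair of points lies in a block, so $\binom{v}{2} = \sum_{B}\binom{|B|}{2} \leq n\binom{K_{\max}}{2}$, giving $v(v-1) \leq n K_{\max}(K_{\max}-1)$. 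Combining $r_p(K_{\max}-1) \geq v-1$ with this pair count should yield $r_p^2 (K_{\max}-1)^2 \geq (v-1)^2 \geq$ something comparable to $n K_{\min}^2$; the hypothesis $K_{\max} \leq \sqrt{2}(K_{\min}-1) < \sqrt{2}\,K_{\min}$ is exactly what is needed to absorb the ratio $K_{\max}/K_{\min}$ and conclude $r_p^2 \leq 2n$, hence $r_1+r_2 \leq 2r_p \leq 2\sqrt{2n}$ and $t < \sqrt{2n}$.

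The main obstacle is getting the constants to line up cleanly: the naive counting bounds ($v-1 \leq r_p(K_{\max}-1)$ and $v(v-1) \leq n K_{\max}(K_{\max}-1)$) must be combined without losing a factor, and the off-by-one terms ($K_{\min}-1$ versus $K_{\min}$, $v-1$ versus $v$) need to be handled carefully so that the hypothesis $K_{\max} \leq \sqrt{2}(K_{\min}-1)$ is exactly strong enough. I expect the cleanest route is: from $v - 1 \le r_p(K_{\max}-1)$ we get $r_p \geq \frac{v-1}{K_{\max}-1}$; simultaneously, since the $r_p$ blocks through $p$ each meet at most $K_{\max}-1$ further points and cover all pairs on $p$, the same inequality is tight in the sense needed, and from $v(v-1) \le n K_{\max}(K_{\max}-1)$ together with $K_{\max} \le \sqrt 2 (K_{\min}-1) \le \sqrt 2 (K_{\max}-1)$ one derives $(v-1)^2 \ge \frac{v(v-1)}{2} \cdot \frac{1}{?}$ — the precise manipulation is routine once the inequality chain is set up, so I would simply verify $r_p^2 \le 2n$ directly and then apply Propositions~\ref{upperbound} and~\ref{sparkRIP}.
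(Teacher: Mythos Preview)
Your overall strategy is right and matches the paper's: combine Proposition~\ref{upperbound} ($s \le r_1+r_2$) with Proposition~\ref{sparkRIP} ($t < s/2$), and then bound $r_1+r_2$ in terms of $n$ via elementary counting on the design. But the counting step is set up with the inequalities going the wrong way, and as written the argument does not close.

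First, the line ``$r_1+r_2 \le 2r_{\min}$'' is false: $r_1 = r_{\min}$ and $r_2 \ge r_1$, so $r_1+r_2 \ge 2r_{\min}$, not $\le$. What you actually need is an \emph{upper} bound on every replication number, and then $r_1+r_2$ is at most twice that bound. Second, your through-a-point count uses $K_{\max}$ and therefore yields a lower bound $r_p \ge \frac{v-1}{K_{\max}-1}$, which is useless here. The fix is to count with $K_{\min}$: the $r_p$ blocks through a point $p$ are pairwise disjoint away from $p$ (since $\lambda=1$) and each contains at least $K_{\min}-1$ further points, so $r_p(K_{\min}-1) \le v-1$ and hence $r_p \le \frac{v-1}{K_{\min}-1}$ for \emph{every} point. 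This gives $s \le r_1+r_2 \le \frac{2(v-1)}{K_{\min}-1}$.

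Your lower bound $n \ge \frac{v(v-1)}{K_{\max}(K_{\max}-1)}$ is correct. Combining it with the corrected replication bound and the hypothesis $K_{\max} \le \sqrt{2}(K_{\min}-1)$ gives
\[
n \ge \frac{v(v-1)}{K_{\max}(K_{\max}-1)} > \frac{(v-1)^2}{K_{\max}^2} \ge \frac{(v-1)^2}{2(K_{\min}-1)^2} \ge \frac{s^2}{8},
\]
so $s < 2\sqrt{2n}$ and $t < \sqrt{2n}$. This is exactly the paper's computation; once you reverse the direction of the replication-number estimate, the constants line up with no further difficulty.
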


\begin{proof}
By Proposition \ref{upperbound}, the spark $s$ of $\Phi$ is at most $r_{1}+r_{2}$.
The replication number of any point is at most $\frac{v-1}{K_{\min}-1}$, so
\[s \leq \frac{2(v-1)}{K_{\min}-1}. \]
Because $n$ is the number of blocks in $\mathcal{D}$ and $K_{\max} \leq \sqrt{2}(K_{\min}-1)$, we have
\[n \geq \frac{v(v-1)}{K_{\max}(K_{\max}-1)} > \frac{(v-1)^{2}}{K_{\max}^2} \geq \frac{(v-1)^{2}}{2(K_{\min}-1)^2} \geq \frac{s^2}{8} \]
and thus $s < 2 \sqrt{2n}$. The result now follows by Proposition \ref{sparkRIP}.
\end{proof}

Combining Proposition \ref{CSUpperBound} with Theorem 10 of \cite{mypaper-PBD} we can specify the $(\ell_{1}, t)$-recoverability of a matrix obtained from Construction \ref{con0} to within a multiplicative factor of $4\sqrt{2}$.

\begin{theorem}\label{CSBounds}
Let $\mathcal{D}$ be a $\PBD(v,K,1)$ such that $K_{\max} \leq \sqrt{2}(K_{\min}-1)$
and the two smallest replication numbers of $\mathcal{D}$ are $r_1$ and $r_2$ (possibly $r_1=r_2$).
Let $\Phi$ be an $n\times N$ matrix obtained from Construction \ref{con0} using $\mathcal{D}$ and let
$t^{\star}$ be the greatest integer $t$ such that $\Phi$ has $(\ell_{1}, t)$-recoverability.
Then $t^{\star} = c\sqrt{n}$ for some $c$ in the real interval $[\frac{1}{4},\sqrt{2}]$.
\end{theorem}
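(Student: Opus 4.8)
The plan is to show that both the lower bound $t^\star \geq \frac{1}{4}\sqrt{n}$ and the upper bound $t^\star \leq \sqrt{2}\sqrt{n}$ follow almost immediately from results already established, so that the only real work is checking that the constant $c = t^\star/\sqrt{n}$ lands in the stated interval and that such a $c$ is well-defined. First I would invoke Theorem 10 of \cite{mypaper-PBD}, which guarantees that $\Phi$ has the $(\ell_1,t)$-recovery property for every $t \leq \frac{\sqrt{n}}{4}$; since $t^\star$ is by definition the largest such $t$, this gives $t^\star \geq \lfloor \frac{\sqrt{n}}{4}\rfloor$, hence $t^\star \geq \frac{1}{4}\sqrt{n}$ up to the harmless rounding (one should note $t^\star$ is an integer, so strictly $c \geq \frac{1}{4} - o(1)$; I would phrase the interval statement accordingly, or simply absorb the rounding as is standard). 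For the upper bound I would apply Proposition \ref{CSUpperBound}, whose hypotheses are exactly those assumed here (the condition $K_{\max} \leq \sqrt{2}(K_{\min}-1)$ and the existence of the two smallest replication numbers), to conclude that if $\Phi$ has $(\ell_1,t)$-recoverability then $t < \sqrt{2n}$; taking $t = t^\star$ yields $t^\star < \sqrt{2}\sqrt{n}$, so $c < \sqrt{2}$.

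Putting these together, $c = t^\star/\sqrt{n}$ satisfies $\frac{1}{4} \leq c \leq \sqrt{2}$, which is the claim. I would also remark that $t^\star$ is well-defined: the set of $t$ with the $(\ell_1,t)$-recovery property is nonempty (it contains $t=1$, or more strongly every $t \leq \frac{\sqrt{n}}{4}$ by Theorem 10) and bounded above (by Proposition \ref{CSUpperBound}), and it is downward closed since recovering all $t$-sparse vectors certainly entails recovering all $t'$-sparse vectors for $t' \leq t$; hence a greatest such integer exists.

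There is essentially no obstacle here — the proposition is a packaging result that records the two-sided bound obtained by combining the deterministic lower bound of the earlier paper with the new spark-based upper bound of Proposition \ref{CSUpperBound} (which itself rests on Proposition \ref{upperbound}, Proposition \ref{sparkRIP}, and the counting bound $r_x \leq \frac{v-1}{K_{\min}-1}$). The only point requiring a little care is the interplay between the continuous constant $c$ and the integrality of $t^\star$: for small $n$ the floor in $\lfloor \frac{\sqrt n}{4}\rfloor$ could in principle push $c$ just below $\frac14$, so I would either state the result asymptotically or note that the displayed interval is to be read with the usual understanding that $t^\star$ is an integer approximation. Modulo that bookkeeping, the proof is two sentences: cite Theorem 10 of \cite{mypaper-PBD} for the lower bound, cite Proposition \ref{CSUpperBound} for the upper bound, and conclude.
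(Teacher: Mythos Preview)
Your proposal is correct and matches the paper's approach exactly: the theorem is presented there as an immediate consequence of combining Proposition \ref{CSUpperBound} for the upper bound with Theorem~10 of \cite{mypaper-PBD} for the lower bound, with no further argument given. Your additional remarks on the well-definedness of $t^{\star}$ and the integrality caveat are more careful than the paper itself, which simply states the result without a proof environment.
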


A matter of practical concern is whether, despite the presence of vectors of a certain sparsity that cannot be recovered, one can nonetheless recover many vectors of larger sparsity. We consider a sample computation.

\begin{example}\label{overperformance}
An arc of maximal size in $\PG(2,q)$ is called an oval. When $q$ is odd, a celebrated result of Segre shows that an oval is the set of points on a conic and has size $q+1$ (see Chapter 8 of \cite{Hirschfeld}). A convenient construction for ovals is as the negation of a Singer set (Proposition VII.5.12 of \cite{BJL}). The projective plane $\PG(2,7)$ corresponds to a $\BIBD(57,8,1)$. Remove eight points of an oval to produce a $\PBD(49, \{6, 7, 8\}, 1)$. Apply Construction \ref{con0} with $(V,\B)$ as the PBD using  Fourier matrices of order 8, to create a $57 \times 392$ matrix $\Phi_{\mathbb{C}}$. Apply Lemma \ref{BourgainTrick} to create a $114 \times 784$ matrix $\Phi_{\mathbb{R}}$. The spark of $\Phi_{\mathbb{C}}$ is at most $16$ by Proposition \ref{upperbound}. So, by Proposition \ref{sparkRIP}, there exist $8$-sparse vectors that cannot be recovered by $\Phi_{\mathbb{C}}$ -- this is in agreement with the bounds $ 1 \leq t \leq 10$ obtained from Theorem \ref{CSBounds} (with $n = 57$).

Figure \ref{fig1} shows the number of successful recoveries of vectors of sparsity $t$ sampled by $\Phi_{\mathbb{R}}$ out of 1,000 attempted recoveries using an OMP algorithm \cite{COSAMP} for each $t \in \{1, 2, \ldots, 50\}$. (See Section \ref{simulationsSec} for  details of all simulations.)

\begin{figure}[ht]
\centering
\includegraphics[width=15cm]{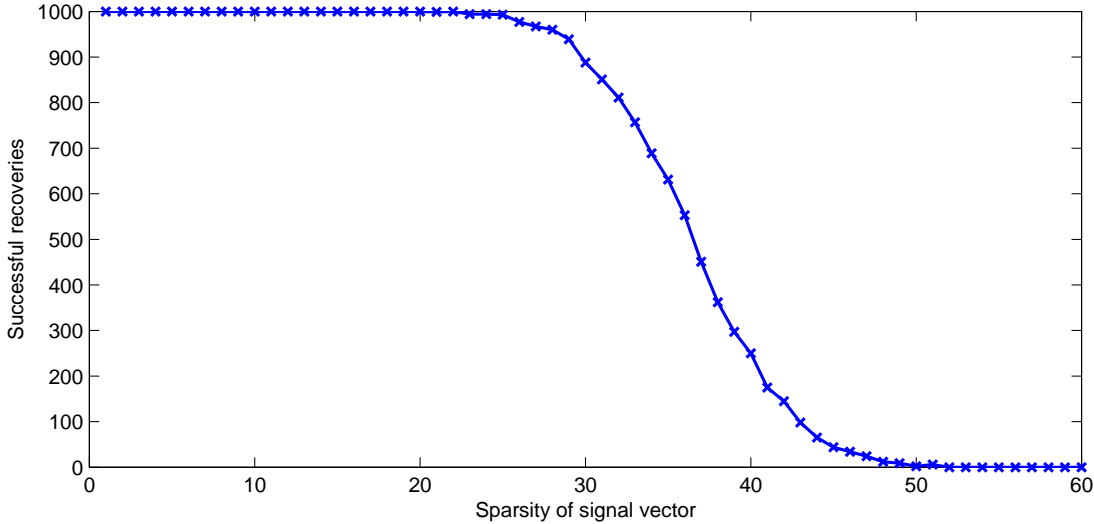}
\caption{Signal recovery for a $114 \times 784$ matrix obtained from Construction \ref{con0}}
\label{fig1}
\end{figure}

\end{example}
Example \ref{overperformance}  suggests that
although there are vectors of sparsity less than $\sqrt{2n}$ which cannot be recovered
by an $n \times N$ matrix $\Phi$ obtained from Construction \ref{con0}, such vectors are rare.
In fact, the typical performance of $\Phi$ is much better. In the next section we  explore
heuristic arguments for why this should be so.

\section{Heuristic arguments for small sparsities}\label{heuristicsSec}

Throughout this section we suppose that $\Phi$ is an $n\times N$ matrix obtained
from Construction \ref{con0} using a PBD $\mathcal{D}$ with all replication numbers equal to $r$.
As in Example \ref{overperformance}, simulations
suggest that although there are vectors of sparsity $2r$
in the nullspace of $\Phi$, and hence there exist vectors of
sparsity $r$ which cannot be recovered, such vectors are rare.
In this section, we give a heuristic justification,
inspired by techniques from random matrix theory. The
exposition in \cite{TaoRandomMatrixTheory} is relevant here.

\subsection{Random matrix theory}

Let $M$ be an $n\times n$ Hermitian matrix. Then the eigenvalues
$\lambda_1(M) \geq \cdots \geq \lambda_n(M)$ of $M$
are real and the function $L_{M}(a) = \frac{1}{n} | \{ i:
\lambda_{i}(M) \leq a\sqrt{n}\}|$
defines the density function of a (discrete) probability distribution
on $\mathbb{R}$.

One of the central problems of random matrix theory is to describe the
function $L_{M}$ when $M$ is drawn from some specific class of matrices.
The fundamental result, due to Wigner,
concerns real symmetric $n \times n$ matrices with all lower
triangular entries drawn independently from a Gaussian
$(0,1)$-distribution. His main result was that as $n \rightarrow
\infty$, $L_{M}$ converges to the
\textit{semi-circle distribution}:
\[ L(s) =  \frac{1}{2\pi} \int_{-1}^{s} \sqrt{1-s^{2}} \;ds.\]

Wigner originally demonstrated pointwise convergence and later contributors
obtained stronger results, weakening assumptions on the probability distribution
of the entries of $M$ and establishing convergence of measure. We  require only one of the bounds on the eigenvalues of $M$,
though there is ample scope for further application of random matrix theory to the analysis
of deterministic matrix constructions. The {\em operator norm} of a Hermitian matrix $M$, denoted $\op{M}$, is the maximum absolute value of an eigenvalue of $M$. Following \cite{TaoRandomMatrixTheory} we will say that an event $E$ depending on a parameter $t$ occurs  occurs {\it with high probability} if $\mathbb{P}(E)>1-o(1)$ and {\it with overwhelming probability} if, for every fixed $A>0$, $\mathbb{P}(E)>1-C_At^{-A}$ for some $C_A$ not depending on $t$.

\begin{lemma}[\cite{TaoRandomMatrixTheory}, Corollary 2.3.6]\label{Tao}
Let $M=(m_{ij})$ be a random $2t \times 2t$ Hermitian matrix such that the entries $m_{ij}$ for $1 \leq i \leq j \leq 2t$ are jointly independent with mean 0 and magnitude uniformly bounded by $1$. Then there exist absolute constants $C,c > 0$ such that for any real number $A \geq C$,
\[\mathbb{P}\left(\op{M} >A\sqrt{2t}\right) \leq C \exp(-cAt).\]
In particular, for large $t$, $\op{M}=O(\sqrt{t})$ with overwhelming probability.
\end{lemma}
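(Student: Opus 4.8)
The plan is to control $\op{M}$ by discretising the unit sphere and pairing a standard $\varepsilon$-net argument with a Hoeffding-type concentration inequality for each fixed test vector. Write $n=2t$. Since $M$ is Hermitian, $\op{M}=\sup_{\|x\|=1}|x^{*}Mx|$; a routine argument shows that if $\Sigma$ is a $\tfrac{1}{4}$-net of the unit sphere of $\C^{n}$ then $\op{M}\le 2\max_{y\in\Sigma}|y^{*}My|$, and, because this unit sphere is isometric to the Euclidean sphere $S^{2n-1}$, such a net can be chosen with $|\Sigma|\le 9^{2n}$. It then suffices to bound $\mathbb{P}(|y^{*}My|>s)$ for a single fixed unit vector $y$ and take a union bound over $\Sigma$.

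So fix $y$ with $\|y\|=1$ and write $y^{*}My=\sum_{i}|y_{i}|^{2}m_{ii}+\sum_{i<j}2\,\mathrm{Re}\big(\bar{y}_{i}m_{ij}y_{j}\big)$. Because the entries $m_{ij}$ with $i\le j$ are jointly independent with mean $0$ and magnitude at most $1$ (and $m_{ii}$ is real by Hermiticity), this exhibits $y^{*}My$ as a sum of independent, real, mean-zero random variables: the diagonal term indexed by $i$ lies in an interval of length $2|y_{i}|^{2}$, and the off-diagonal term indexed by $i<j$ in an interval of length $4|y_{i}||y_{j}|$. The sum of the squares of these lengths is $4\sum_{i}|y_{i}|^{4}+16\sum_{i<j}|y_{i}|^{2}|y_{j}|^{2}\le 16\big(\sum_{i}|y_{i}|^{2}\big)^{2}=16$, so Hoeffding's inequality gives $\mathbb{P}(|y^{*}My|>s)\le 2\exp(-s^{2}/8)$ for every $s>0$. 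The essential point is that this variance proxy is an absolute constant, independent of $n$; this uniform sub-Gaussian tail is the crux of the argument.

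Taking $s=\tfrac{1}{2}A\sqrt{n}$ and summing over $\Sigma$,
\[
\mathbb{P}\big(\op{M}>A\sqrt{n}\big)\;\le\;9^{2n}\cdot 2\exp\!\big(-\tfrac{A^{2}n}{32}\big)\;=\;2\exp\!\big(-n\big(\tfrac{A^{2}}{32}-2\ln 9\big)\big).
\]
For $A$ at least an absolute constant $C$ the quadratic term dominates: say $\tfrac{A^{2}}{32}-2\ln 9\ge\tfrac{A}{64}$, so $\mathbb{P}(\op{M}>A\sqrt{n})\le 2\exp(-An/64)$. Recalling $n=2t$ and $\sqrt{n}=\sqrt{2t}$ and renaming the constants yields $\mathbb{P}(\op{M}>A\sqrt{2t})\le C\exp(-cAt)$ for all $A\ge C$, which is the claimed bound. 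For the final assertion, fix $A=C$: then $\mathbb{P}(\op{M}>C\sqrt{2t})\le C\exp(-cCt)$, which decays faster than any fixed power of $t$, so $\op{M}=O(\sqrt{t})$ with overwhelming probability.

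The computation above is routine, and I expect the only delicate points to be the Hermitian bookkeeping — the diagonal entries are real and each conjugate pair $m_{ij},m_{ji}$ supplies a single independent real summand, so factors of two are easy to mismanage — and the choice of constants ensuring that the metric-entropy factor $9^{2n}$ is swallowed by the Gaussian-type decay, which is exactly why the restriction $A\ge C$ is genuinely needed (indeed $\op{M}$ is typically of order $\sqrt{n}$, so no exponentially small bound can hold for small $A$). An alternative, nearer to the moment-method treatment of Wigner's theorem, would be to combine $\op{M}^{2\ell}\le\tr(M^{2\ell})$, Markov's inequality, and an estimate of $\mathbb{E}\big[\tr(M^{2\ell})\big]$ obtained by counting closed walks of length $2\ell$ in which every edge is traversed at least twice (the mean-zero hypothesis killing all other walks); I would not take this route here, since controlling the lower-order walk contributions couples $\ell$ to $n$ and gives a weaker tail than the net argument.
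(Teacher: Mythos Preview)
Your argument is correct: the $\varepsilon$-net reduction combined with Hoeffding's inequality on the quadratic form $y^{*}My$ is exactly the standard route to this tail bound, and your bookkeeping (the Hermitian decomposition, the variance proxy bounded by an absolute constant, and the choice of $A$ large enough to absorb the entropy factor $9^{2n}$) is sound.

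Note, however, that the paper does not give its own proof of this lemma at all --- it is simply quoted as Corollary~2.3.6 of \cite{TaoRandomMatrixTheory} and used as a black box. The proof you have supplied is in fact essentially the one Tao gives in that reference, so there is no meaningful divergence in approach to compare. Your closing remark about the moment method is apt: Tao's text also discusses that alternative, and for the same reason (weaker tails at the relevant scale) it is not the route taken for this corollary.
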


\subsection{RIP bounds and signal recovery}

We recall that $\Phi$ has the \textit{restricted isometry property}
with constants $t, \delta$, abbreviated $\RIP(t, \delta)$ if and only if
for each $t$-sparse vector $m$,
\[ (1-\delta) \|m\|_{2}^{2} \leq \| \Phi m \|_{2}^{2} \leq (1+ \delta) \|m\|_{2}^{2}. \]
This is equivalent to the statement that, for each set $S$ of at most $t$ columns of $\Phi$, the eigenvalues of the matrix $\Phi_{S}^{\ast} \Phi_{S}$ all lie in
the interval $[ 1-\delta, 1+\delta]$, where $\Phi_{S}$ is the submatrix of $\Phi$
containing only the columns in $S$.

RIP conditions have been used extensively to provide sufficient conditions for $(\ell_1,t)$-recoverability \cite{CandesRombergTaoRobustUncertainty, CandesRIP}. For our purposes  it is enough to note that
if $\Phi$ satisfies the $\RIP(2t, \sqrt{2}-1)$, then $\Phi$ has $(\ell_1,t)$-recoverability. In the next subsection,
we  develop a simple model for signal recovery, which suggests that $\Phi$ recovers vectors of sparsity $O(\sqrt{n\log n})$ with high probability.

\subsection{A heuristic model}

We begin by developing a model for signals $m$ with the property that
all non-zero coordinates occur in columns of $\Phi$ corresponding to different
points of the design $\mathcal{D}$. Recall that we are assuming that all points in
$\mathcal{D}$ have equal replication number $r$.

Let $S$ be a set of $2t$ columns of $\Phi$, no pair corresponding to the same point, and
denote by $\Phi_{S}$ the submatrix of $\Phi$ consisting of the columns in $S$.
With these assumptions, $\Phi_{S}^{*} \Phi_{S}$ is of the form
$I + \Psi_{S}$ where $\Psi_{S}$ is a Hermitian matrix with zero diagonal,
and all off-diagonal entries of magnitude $\frac{1}{r}$. We lack information on
the \textbf{phase} of the entries of $\Psi_S$. Consider the space of matrices $\Psi_{S}$
as $S$ varies over all sets of $2t$ columns of $\Phi$ with no pair corresponding to the same point.
Our heuristic is that a typical element of this space behaves as the matrix $\frac{1}{r}M$, where $M$ is
a $2t \times 2t$ random Hermitian matrix with zero diagonal in which the strictly upper
triangular entries have magnitude $1$ and uniformly random phase.

Now, observe that the eigenvectors of $I + \Psi_{S}$ are those of $\Psi_{S}$. By Lemma \ref{Tao}, for large $t$, $\op{M}=O(\sqrt{t})$ with overwhelming probability. Thus, provided that $t = o(r^{2})$ as $r \rightarrow \infty$, the eigenvalues of $\Phi_{S}^* \Phi_{S}=I + \Psi_{S}$ are all arbitrarily close to $1$ with overwhelming probability. In particular, our heuristic suggests that $\Phi$ recovers \textbf{any} vector with the property that its support intersects each point of the design in at most one column with high probability.

Allowing $d$ non-zero entries in $S$ to be labelled by the same point of $\mathcal{D}$ introduces $2\binom{d}{2}$ off-diagonal zero entries in $\Psi_{S}$. The eigenvalues of a matrix are continuous functions of the matrix entries (via the characteristic polynomial). So provided that the total number of zeroes introduced is not excessive, the analysis of the heuristic continues to hold. Thus, it seems reasonable to suppose that our heuristic is valid for signals of sparsity at most $r \log r$ with at most $\log r$ non-zero coordinates corresponding to any point of the design. We  call such signal vectors \textit{suitable}.

What we have obtained so far is not a uniform recovery guarantee, however; it does \textbf{not} follow that $\Phi$ recovers \textbf{all} suitable vectors with high probability. We need a slightly more careful analysis to obtain a uniform recovery guarantee. Taking $t = r \log r$ and $A = \frac{1}{2}(2-\sqrt{2}) r^{1/2}(\log r)^{-1/2}$ in Lemma \ref{Tao}, we obtain
\[ \mathbb{P}\left(\tfrac{1}{r}\op{M} > \sqrt{2}-1\right) \leq Ce^{-c r^{3/2}(\log r)^{1/2}} \]
for absolute constants $C$ and $c$. Now, the total number of subsets of size $2r\log r$ of a set of size $N = vr = O(r^{3})$ is bounded above by $N^{2r\log r} = O(e^{6r (\log r)^{2}})$. So, because $r^{3/2}(\log r)^{1/2}$ grows faster than $6r (\log r)^{2}$, we have that $\mathbb{P}(\tfrac{1}{r}\op{M} > \sqrt{2}-1)=o(N^{-2r\log r})$.
It follows by the union bound that for sufficiently large $r$ the eigenvalues of $\frac{1}{r}M$ all lie in the interval $[2-\sqrt{2}, \sqrt{2}]$ with high probability. (Recall that $r$ is the replication number of the design used in Construction \ref{con0} to obtain $\Phi$ and that the order of $M$ is a function of $r$.)

Consequently, the heuristic suggests that there is a high probability that, for large $r$, $\Phi$ satisfies the $\RIP(2r\log r, \sqrt{2}-1)$ on all suitable sets of columns, and so all suitable vectors are recoverable. By Proposition \ref{CSUpperBound} and Lemma \ref{sparkRIP} we know that there exist vectors of sparsity $r$ that are not recoverable, but these are far from suitable.

This analysis suggests that for large $r$, $\Phi$  recovers all suitable vectors of sparsity $\sqrt{n\log n}$, with high probability. This result is in excess of the square-root bound. The probability that $\Phi$ fails to recover any particular random signal vector is exponentially small, being essentially the probability that many non-zero coordinates of the signal are concentrated in columns labelled by a small number of points.

\section{Simulations}\label{simulationsSec}

In this section we describe the results of extensive simulations. All simulations are
performed with real-valued matrices and signal vectors. If our construction yields a complex matrix, we apply Lemma \ref{BourgainTrick} to obtain a real one. To illustrate the flexibility of Construction \ref{con0} we derive input PBDs from BIBDs in various ways.

Our matrices are stored as an array of real floating point numbers of some
fixed accuracy $l$. The entries typically consist of numbers of the form $\cos(2s\pi/r)$ or
$\sin(2s\pi/r)$ where $r$ is a replication number of the design used to create the matrix and $s \in \{0,\ldots,r-1\}$.
Our simulations are performed as follows.

\begin{itemize}

	\item Vectors of sparsity $t$ are constructed by choosing $t$ coordinate positions uniformly at random,
and populating these locations either with a value from a uniform distribution on $(0, 1)$ (in matlab)
or a uniform distribution on $\{ \frac{i}{100}: 1 \leq i \leq 100\}$ (in MAGMA) and then normalising. In all discussion of simulations, when we refer to a $t$-sparse vector we mean that it has exactly $t$ non-zero coordinates.
	
	\item We invoke the standard implementation of one of the recovery algorithms from Section \ref{sec:ra} to obtain a vector $\hat{m}$ such that $\Phi \hat{m} \approx \Phi m$.
	Only $\Phi$ and $\Phi m$ are provided to the solver;   the sparsity of $m$ is not supplied.
	
	\item The other parameters in our test are the precision to which real numbers are stored (typically 25 decimal places) and the recovery error allowed, $\epsilon$. A trial is counted a success if $|m-\hat{m}| < \epsilon$, and a failure otherwise. We report the proportion of successes at sparsity $t$ as a proxy for compressed sensing performance.
\end{itemize}

Some general comments  apply to all of our simulations.
In essence choosing a design and Hadamard matrices for the construction is a multi-dimensional optimisation, probably application specific. After fixing a matrix $\Phi$, one chooses the precision, the maximal entry size in a random vector and the error allowed in recovery. These all impact  performance, both proportion of correct recoveries and recovery time.

The recovery process is not very sensitive to the allowable recovery error, in the sense
that if the algorithm converges to the correct solution it generally does so to machine tolerance.
Taking sparse vectors to be binary valued greatly improves recovery, but we focus  on the model given. Numerical instability results if the precision to which real numbers are stored is too small; this is determined experimentally.

\subsection{Recovery algorithms}\label{sec:ra}

We compare the performance of two different well-established recovery methods when applied to signal vectors that have been sampled using matrices created via Construction \ref{con0}.

Firstly, we employ na\"{\i}ve Linear Programming (LP); we consider implementations in both \textsc{MAGMA} and \texttt{matlab}, \cite{MAGMA, matlab}. We include implementations in two different systems to demonstrate the potential differences between solvers, which can be significant.

Secondly, we employ \textit{matching pursuit}, which is a greedy algorithm for producing sparse approximations of high dimensional data in an over-complete dictionary. A well-known implementation for compressed sensing is CoSaMP \cite{COSAMP}. We use the implementation of the basic OMP algorithm from \cite{COSAMP} rather than the more specialised CoSaMP algorithm.
It is well known that the worst case complexity of the simplex method in linear programming is exponential in the number of variables.
However the expected complexity for typical examples is $O(N^3)$. In comparison, the complexity of CoSaMP is $O(N \log^2N)$.

We created a $57 \times 456$ matrix $\Phi_{\mathbb{C}}$ using Construction \ref{con0}, taking $(V,\mathcal{B})$ to be a $\BIBD(57, 8, 1)$ corresponding to a $\PG(2,7)$ and $H_1,\ldots,H_v$ to be Fourier matrices of order $8$. We then applied Lemma \ref{BourgainTrick} to obtain a $114 \times 912$ matrix $\Phi_{\mathbb{R}}$. The allowable recovery error was $10^{-8}$ for all algorithms. In Figure \ref{fig2}, we record the number of successful recoveries for $1000$ randomly generated vectors sampled by $\Phi_{\mathbb{R}}$ for each algorithm and each sparsity.

Applying Theorem \ref{CSBounds} shows that $\Phi_{\mathbb{C}}$ allows recovery of all vectors of sparsity at most $\lceil\frac{1}{4}\sqrt{57}\rceil=2$, and provides no stronger guarantee. Furthermore, $\Phi_{\mathbb{C}}$ contains real rows so, as in Remark \ref{counter}, there exist
$12$-sparse vectors in the nullspace of $\Phi_{\mathbb{C}}$, and hence there exist $6$-sparse
vectors whose recovery is not permitted by $\Phi_{\mathbb{C}}$. So the recovery performance of $\Phi_{\mathbb{R}}$ is quite striking.

\begin{figure}[ht]
\centering
\includegraphics[width=15cm]{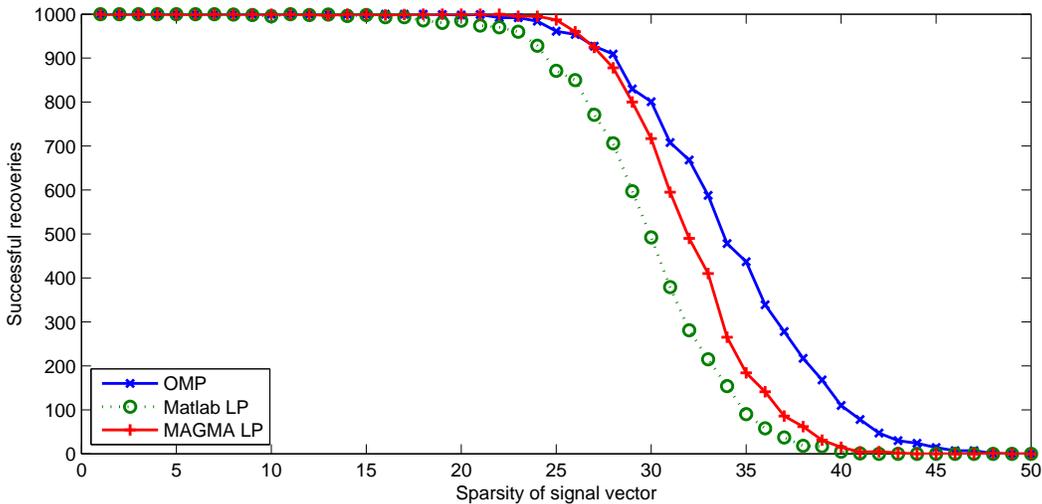}
\caption{Comparison of recovery algorithms}
\label{fig2}
\end{figure}

While there are clear differences in performance among these algorithms,
they appear to behave in a broadly similar fashion. In general, running times for
OMP are an order of magnitude faster than the matlab linear programming solver. The \textsc{MAGMA} solver
has intermediate runtime.

\subsection{Comparison with Gaussian matrices}

Gaussian matrices are the de facto standard against which other matrix constructions are measured in compressed sensing. The projective plane $\PG(2, 11)$ corresponds to a $\BIBD(133, 12, 1)$. Removing two blocks and all points incident with either block produces a $\PBD(110, \{10, 11\}, 1)$ in which all points have replication number $12$. We applied Construction \ref{con0}, taking $(V,\mathcal{B})$ to be this PBD and $H_1,\ldots,H_v$ to be Fourier matrices of order 12, to obtain a $131 \times 1320$ matrix $\Phi_{\mathbb{C}}$. We then applied Lemma \ref{BourgainTrick} to obtain a $262 \times 2640$ matrix $\Phi_{\mathbb{R}}$.

We compare the compressed sensing recovery performance of $\Phi_{\mathbb{R}}$ to that of a Gaussian ensemble with the same numbers of rows and columns (using the OMP algorithm). Both linear programming and OMP ran faster by an order of magnitude for $\Phi_{\mathbb{R}}$ than for the Gaussian ensemble. Our results are given in Figure \ref{fig3}.

\begin{figure}[ht]
\centering
\includegraphics[width=15cm]{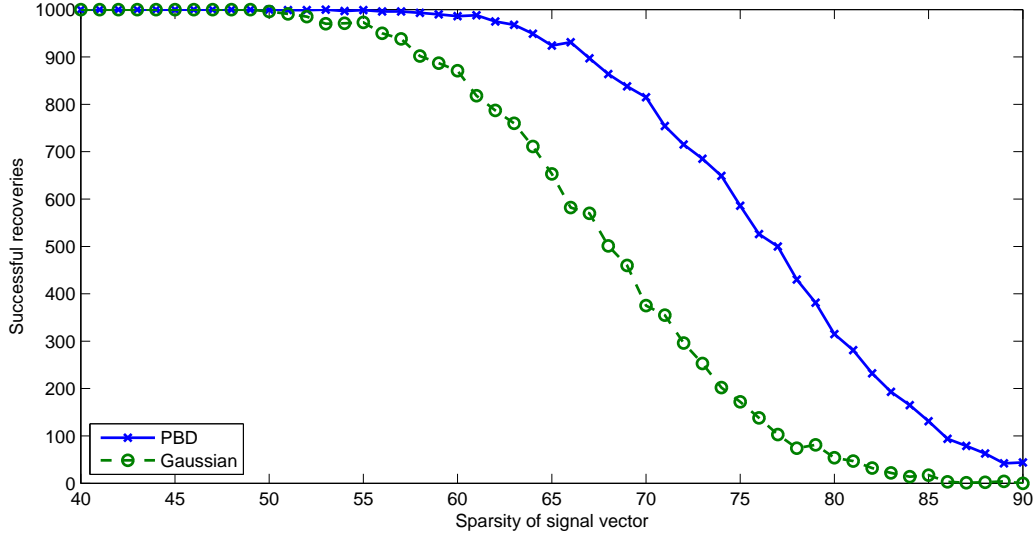}
\caption{Comparison with Gaussian ensemble}
\label{fig3}
\end{figure}

\subsection{Factors affecting algorithm performance}

In this section we discuss a number of factors that influence recovery performance. Specifically we consider the presence of noise in received signals,
signed signal vectors (until now our signal vectors have had positive coordinates),
and the effect of using different Hadamard matrices in  Construction \ref{con0}.
In all cases, we find that the construction is robust and reductions in performance are not substantial.

\subsubsection{Signed signal vectors}

The linear programming solver in \textsc{MAGMA} requires variables to be positive
(or at least greater than some bound). We use a standard trick to allow negative entries
in $x$. For each variable $x_{i}$ in the original problem, we introduce a pair of variables,
$x_{i}^{+}$ and $x_{i}^{-}$. Then we replace each appearance of $x_{i}$ with $x_{i}^{+} - x_{i}^{-}$.

When a solution has been found, we interpret $x_{i}^{+}$ as a positive number, and $x_{i}^{-}$
as a negative one. This has the disadvantage of doubling the number of variables in the linear program. This results in longer run times and slightly lower performance, but the results of the simulations are broadly comparable to those in the positive case.

\subsubsection{Noise}

We consider both uniform and burst noise. In each case we consider positive
and signed noise vectors. In general, recovery in the presence of noise is robust.
Testing all algorithms in all regimes would produce an overabundance of data, so
we give only a single representative example of our simulations.

The projective plane $\PG(2,11)$ corresponds to a $\BIBD(133,12,1)$. Remove twelve points of an oval to produce a $\PBD(121, \{10, 11, 12\}, 1)$. Apply Construction \ref{con0} taking $H_1,\ldots,H_v$ to be Fourier matrices of order 12, and then apply Lemma \ref{BourgainTrick} to construct a compressed sensing matrix $\Phi_{\mathbb{R}}$ with $266$ rows and $2904$ columns. To examine the performance of $\Phi_{\mathbb{R}}$ in the presence of uniform positive noise, we construct noise vectors with entries uniformly distributed in $(0,1)$ and scale the vector to some predetermined $\ell_2$-norm. We then compare the signal vector $m$ (of $\ell_2$-norm $1$) to the solution returned by the matlab linear programming solver for $\Phi_{\mathbb{R}}(m+\epsilon)$. We count a recovery as a success if the reconstruction error is below $10^{-8}$. Table \ref{tab1} summarises our results.

\begin{table}[H]
\begin{center}
\begin{tabular}{|c||c|c|c|c|c|}
\hline
&\multicolumn{5}{c|}{$\ell_2$-norm of noise vector}  \\
\cline{2-6}
Sparsity & 0 & $10^{-12}$ & $10^{-10}$ & $10^{-9}$ & $2\times 10^{-9}$ \\
\hline
30 & 100 &  99 & 98 & 79 & 66 \\
35 & 100 & 100 & 97 & 79 & 69 \\
40 & 100 & 100 & 91 & 77 & 49 \\
45 &  97 &  93 & 88 & 62 & 27 \\
50 &  87 &  79 & 69 & 33 &  5 \\
55 &  61 &  56 & 30 & 14 &  2 \\
60 &  27 &  22 & 22 &  5 &  0 \\
\hline
\end{tabular}
\caption{Number of successful recoveries out of 100 for signals of various sparsities and for different noise levels.}
\label{tab1}
\end{center}
\end{table}

Recovery decays gracefully in the presence of noise, particularly when the sparsity of the signal is not close to the limit of what can be recovered.

\subsubsection{Choice of Hadamard matrix}

We illustrate the effect of the choice of Hadamard matrix with a small example. We take a $\BIBD(25,3,1)$ obtained from the Bose construction, which has replication number $r = 12$ (see p.25 of \cite{ColbournRosa}). The $100 \times 300$ matrices $\Phi_{\mathbb{C}}$ and $\Phi'_{\mathbb{C}}$ are obtained via Construction \ref{con0}, taking $(V,\B)$ to be this BIBD and $H_1,\ldots,H_v$ to be real Hadamard matrices of order 12 and Fourier matrices of order 12, respectively. Lemma \ref{BourgainTrick} is then applied to both matrices to obtain $200 \times 600$ matrices $\Phi_{\mathbb{R}}$ and $\Phi'_{\mathbb{R}}$. (While $\Phi_{\mathbb{C}}$ is already real, this application of Lemma \ref{BourgainTrick} allows a direct comparison between constructions. It has no effect on the sparsity of vectors recovered.) Recovery performance varies by less than $2 \%$ and runtime by less than $6\%$ with the OMP algorithm. Such variation could be caused simply by random fluctuations. On the other hand, the differences in performance in the MAGMA LP-implementation are substantial. We record them in Table \ref{tab2}.

\begin{table}[H]
\begin{center}
\begin{tabular}{|c||c|c||c|c|}
\hline
&\multicolumn{2}{c||}{Real matrix}&\multicolumn{2}{c|}{Fourier matrix} \\ \cline{2-5}
Sparsity & No. successes & Avg. time & No. successes & Avg. time \\
\hline
56 & 100 & 4.8 & 100& (crashed twice)\\
58 & 100 & 4.5 & 99 & 14.5\\
60 & 98  & 4.4 & 96 & 14.8\\
62 & 100 & 4.6 & 98 & 15.1\\
64 & 93  & 4.8 & 99 & 15.6 \\
66 & 96  & 4.8 & 97 & 15.9\\
68 & 94  & 4.9 & 96 & 16.3\\
70 & 87  & 5.0 & 99 & 16.8\\
\hline
\end{tabular}
\caption{Number of successful recoveries out of 100 and the average recovery time in seconds for real and Fourier matrices.}
\label{tab2}
\end{center}
\end{table}

With the MAGMA LP-implementation, it would appear that $\Phi'_{\mathbb{R}}$ produces
better recovery at the cost of increased runtime and the risk of numerical instability. Obviously the choice of matrix and recovery algorithm would depend on the particular application.

\section{An efficient algorithm for sparse recovery}\label{NewAlgSec}

In this section we describe and investigate a new algorithm for signal recovery, tailored specifically for an $n \times N$ compressed sensing matrix created via Construction \ref{con0}. It is designed to recover vectors of sparsity at most $O(\sqrt{n})$ and is not expected to be competitive with LP or CoSaMP at large sparsities. The algorithm exploits the structure of matrices created via Construction \ref{con0} in order to achieve efficiency in both running time and storage requirements. Under certain assumptions, it can be shown to run successfully in time $O(N\log n)$ and space $O(n^2)$. (Such thresholding based algorithms are one of the main algorithmic approaches to compressed sensing; see Chapter 3 of \cite{FoucartRauhut}.) Throughout this section we  employ the notation that we introduce in describing the algorithm.

\begin{algorithm}\label{NewAlg}
Suppose that $\Phi$ is a matrix created via Construction \ref{con0}, using a design $(V,\mathcal{B})$ with replication numbers $r_1 \leq \cdots \leq r_v$ and Hadamard matrices $H_1,\ldots,H_v$. For each $i \in V$, let $\Phi_i$ be the submatrix of $\Phi$ consisting of the columns corresponding to point $i$. We store only the following information.
\begin{itemize}
    \item
For each $i \in V$, a list of the rows of $\Phi_i$ that are non-zero, and a record of which rows of $H_{i}$ are located there.
    \item A copy of each Hadamard matrix $H_{i}$.
\end{itemize}
Suppose that $\Phi$ is used to sample some $N \times 1$ message vector $m$. Our algorithm runs as follows.
\begin{enumerate}
    \item
Construct an initial estimate. For each $i \in V$, we take the set of rows in which $\Phi_i$ is non-zero, take the $r_i \times 1$ vector $y_i$ of the corresponding received samples, and compute $\hat{m}_i=\frac{1}{\sqrt{r_i}}H_{i}^{\ast}y_{i}$. Concatenating the vectors $\hat{m}_i$ over all $i \in V$ we construct an initial estimate $\hat{m}$ for the signal vector $m$.
    \item
Guess the signal coordinates. For some prespecified $|S| \leq r_1$, let $S$ be the index set of the $|S|$ coordinates of $\hat{m}$ of greatest magnitude. For each $i \in V$, let $q_i$ be the number of columns indexed by $S$ that are in $\Phi_i$, and let $V'=\{i \in V:q_i \geq 1\}$.
    \item
Identify uncontaminated samples. For each $i \in V'$, we find a set $Q_i$ of $q_i$ rows of $\Phi$ that are non-zero in $\Phi_i$ but zero in $\Phi_j$ for each $j \in V'\setminus\{i\}$ (such rows correspond to blocks that contain $i$ but no point in $V'\setminus \{i\}$ and, because $|V'\setminus \{i\}| \leq |S|-q_i \leq r_i-q_i$, at least $q_i$ such blocks exist).
    \item
Recover the signal. For each $i \in V'$, we find the coordinates in the $q_i$ positions indexed by $S$ that correspond to columns in $\Phi_i$ by solving the $q_i \times q_i$ linear system induced by those columns and the rows in $Q_i$.
\end{enumerate}
\end{algorithm}

\begin{remark}
The record of the non-zero rows of $\Phi$ can be derived easily from the incidence matrix of $(V,\mathcal{B})$. For many designs (for example, projective planes, designs with cyclic automorphisms, and so on) this information need not be stored explicitly. Even if the information is stored explicitly, the space used is $O(n^{2})$. A similar observation holds for the Hadamard matrices; Fourier or Paley matrices need not be stored explicitly. With appropriate design choices, the storage space required can be logarithmic in $n$.
\end{remark}

As long as $S$ contains the positions of all the non-zero elements, the algorithm finds the right solution.
In our analysis of this algorithm we focus, for the sake of simplicity, on the case in which all the replication numbers of $(V,\mathcal{B})$ are equal. We first show that our estimate for a signal coordinate differs from the actual value by an error term introduced by the other non-zero signal coordinates. From this it is easy to show that Algorithm \ref{NewAlg} recovers a signal provided that no non-zero signal coordinate has magnitude too small in comparison with the $\ell_1$-norm of the signal.

\begin{lemma}\label{algCoord}
Let $\Phi$ be a matrix created via Construction \ref{con0} using a PBD $\mathcal{D}$ all of whose replication numbers are equal to some integer $r$. Suppose that $\Phi$ is used to sample a signal $m$ of sparsity at most $r$ and that Algorithm \ref{NewAlg} is applied to find an estimate $\hat{m}$ for $m$. Let $\hat{x}$ be the estimate for a coordinate $x$ of $m$ that corresponds to a point $i$ of $\mathcal{D}$. Then
\begin{itemize}
    \item[(i)]
$\hat{x}=x+\frac{1}{r}(h_1x_1+\cdots +h_dx_d)$ where $x_1,\ldots,x_d$ are the non-zero coordinates of $m$ that do not correspond to point $i$ of $\mathcal{D}$ and $h_1,\ldots,h_d$ are complex numbers of magnitude $1$; and
    \item[(ii)]
$|\hat{x}-x| \leq \frac{1}{r}\|m\|_1$.
\end{itemize}
\end{lemma}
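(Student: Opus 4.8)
The plan is to unpack the definition of the initial estimate $\hat m$ from Step 1 of Algorithm~\ref{NewAlg} and track exactly how the non-zero coordinates of $m$ contribute to the estimate $\hat x$ of a single coordinate. Fix the point $i\in V$ to which $x$ corresponds. By construction, the columns of $\Phi$ indexed by point $i$ are, up to the embedding into the ambient $\mathbb{R}^n$ (or $\mathbb{C}^n$), the scaled rows of $\frac{1}{\sqrt r}H_i$ placed in the $r$ rows where $\Phi_i$ is non-zero; these rows correspond to the $r$ blocks through $i$. The received sample restricted to those $r$ rows is $y_i = \frac{1}{\sqrt r}H_i^{\!\top} m^{(i)} + (\text{contributions from other points})$, where $m^{(i)}$ is the subvector of $m$ on the coordinates belonging to point $i$. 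Applying $\frac{1}{\sqrt r}H_i^{\ast}$ and using $H_i^\ast H_i = rI$ recovers $m^{(i)}$ exactly from the first term; so $\hat m_i = m^{(i)} + \frac{1}{\sqrt r}H_i^\ast(\text{cross terms})$, and in particular $\hat x = x + (\text{cross terms})$.

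The key step is to evaluate that cross term. A non-zero coordinate $x_\ell$ of $m$ belonging to a \emph{different} point $j$ contributes to $y_i$ only in those rows corresponding to blocks containing both $i$ and $j$; since $\mathcal D$ has $\lambda = 1$, there is exactly one such block, hence exactly one row. In that row the relevant entry of the column for $x_\ell$ is $\frac{1}{\sqrt r}(H_j)_{a,b}$ for some index $(a,b)$, a complex number of magnitude $1$. When we hit $y_i$ with $\frac{1}{\sqrt r}H_i^\ast$ and read off the coordinate corresponding to $x$, that single row contributes $\frac{1}{\sqrt r}\cdot\overline{(H_i)_{\text{(that row)},\,c}}\cdot\frac{1}{\sqrt r}(H_j)_{a,b}\cdot x_\ell$, where $c$ indexes the row of $H_i$ sitting at that block and column corresponding to $x$; the product of the two magnitude-$1$ factors is again a complex number $h_\ell$ of magnitude $1$, and the two factors of $\frac{1}{\sqrt r}$ combine to $\frac{1}{r}$. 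Summing over the $d \le r-1$ non-zero coordinates $x_1,\dots,x_d$ of $m$ not belonging to point $i$ gives $\hat x = x + \frac{1}{r}(h_1x_1+\cdots+h_dx_d)$ with each $|h_\ell|=1$, which is part~(i). (One should note that coordinates of $m$ belonging to point $i$ itself are handled exactly by the $H_i^\ast H_i=rI$ step and contribute nothing to the error; and when Lemma~\ref{BourgainTrick} has been applied, the magnitude-preservation in that lemma keeps all the magnitude-$1$ claims intact.)

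Part~(ii) is then immediate from part~(i) by the triangle inequality: $|\hat x - x| = \frac{1}{r}\,|h_1x_1+\cdots+h_dx_d| \le \frac{1}{r}(|x_1|+\cdots+|x_d|) \le \frac{1}{r}\|m\|_1$, since $\{x_1,\dots,x_d\}$ is a subset of the entries of $m$.

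The main obstacle is purely bookkeeping: one must set up notation carefully enough to see that each ``foreign'' non-zero coordinate interacts with the point-$i$ block structure in exactly one row — this is where $\lambda=1$ is used — and that the two Hadamard-matrix entries involved each have magnitude $1$ so their product does too. There is no analytic difficulty; the care needed is in indexing (which block, which row of which Hadamard matrix) and in confirming that nothing is double-counted and that same-point coordinates drop out cleanly via orthogonality.
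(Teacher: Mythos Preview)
Your proof is correct and follows essentially the same route as the paper. The paper packages the cross terms into a matrix $A_i$ with one nonzero entry per column (coming from the unique block through $i$ and $j$, using $\lambda=1$) and then computes $\hat m_i = m_i + \tfrac{1}{\sqrt r}H_i^\ast A_i w_i$; you carry out the same computation entry-by-entry, but the content is identical.
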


\begin{proof}
Let $x_1,\ldots,x_d$ be the non-zero coordinates of $m$ that do not correspond to point $i$ of $\mathcal{D}$ and let $w_i$ be the vector $(x_1,\ldots,x_d)^{\top}$. Let $m_i$ be the $r \times 1$ vector consisting of the coordinates of $m$ corresponding to point $i$ of $\mathcal{D}$ and let $\hat{m}_i$ be the estimate for $m_i$. Note that $y_i=\frac{1}{\sqrt{r}}H_im_i+A_iw_i$ for some $r \times d$ matrix $A_i$ such that each column of $A_i$ contains exactly one complex number of magnitude 1 and every other entry of $A_i$ is a 0. So
$$\hat{m}_i=\tfrac{1}{r}H^{\ast}_iH_im_i+\tfrac{1}{\sqrt{r}}H^{\ast}_iA_iw_i=m_i+\tfrac{1}{\sqrt{r}}(H^{\ast}_iA_i)w_i.$$
Thus, because every entry of $H^{\ast}_iA_i$ is a complex number of magnitude 1,
$$\hat{x}=x+\tfrac{1}{r}(h_1x_1+\cdots +h_dx_d)$$
where $h_1,\ldots,h_d$ are complex numbers of magnitude $1$. Further,
$$|\hat{x}-x|=|\tfrac{1}{r}(h_1x_1+\cdots +h_dx_d)|\leq\tfrac{1}{r}(|x_1|+\cdots+|x_d|)\leq\tfrac{1}{r}\|m\|_1.$$
\end{proof}

\begin{corollary}\label{recovCor}
Let $\Phi$ be a matrix created via Construction \ref{con0} using a PBD $\mathcal{D}$ all of whose replication numbers are equal to some integer $r$. Suppose that $\Phi$ is used to sample a signal $m$ of sparsity $t \leq r$. If each non-zero coordinate of $m$ has magnitude at least $\frac{2}{r}\|m\|_1$, then Algorithm \ref{NewAlg} with $|S| \geq t$ recovers $m$.
\end{corollary}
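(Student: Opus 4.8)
The plan is to use Lemma~\ref{algCoord} to show that the thresholding step (Step~2 of Algorithm~\ref{NewAlg}) captures every non-zero coordinate of $m$, so that $\supp(m)\subseteq S$, and then to check that once this holds Steps~3 and~4 return $m$ exactly.

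First I would bound $|\hat{x}|$ for each coordinate $x$ of $m$, where $\hat{x}$ is the corresponding coordinate of the initial estimate $\hat{m}$. If $x=0$, then Lemma~\ref{algCoord}(ii) gives $|\hat{x}|=|\hat{x}-x|\le\tfrac{1}{r}\|m\|_1$. If $x\ne 0$, say $x$ corresponds to point $i$ of $\mathcal{D}$, I would sharpen Lemma~\ref{algCoord}(i): the coordinates $x_1,\dots,x_d$ in the error term are exactly the non-zero coordinates of $m$ lying at points other than $i$, so $|x_1|+\cdots+|x_d|\le\|m\|_1-|x|$, whence
\[
|\hat{x}|\ \ge\ |x|-\tfrac{1}{r}\bigl(\|m\|_1-|x|\bigr)\ =\ \tfrac{r+1}{r}|x|-\tfrac{1}{r}\|m\|_1\ \ge\ \tfrac{r+2}{r^{2}}\|m\|_1\ >\ \tfrac{1}{r}\|m\|_1
\]
by the hypothesis $|x|\ge\tfrac{2}{r}\|m\|_1$. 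Thus every non-zero coordinate of $m$ produces an entry of $\hat{m}$ of magnitude strictly greater than $\tfrac{1}{r}\|m\|_1$, while every zero coordinate produces an entry of magnitude at most $\tfrac{1}{r}\|m\|_1$. Since $\supp(m)$ has exactly $t\le|S|$ elements, the $|S|$ entries of $\hat{m}$ of largest magnitude include all of $\supp(m)$, i.e.\ $\supp(m)\subseteq S$.

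It then remains to see that Steps~3 and~4 recover $m$ given $\supp(m)\subseteq S$. If a point $j$ lies outside $V'$ it has no column indexed by $S$, hence none indexed by $\supp(m)$, so $m$ vanishes on all coordinates at point $j$. Fix $i\in V'$ and the set $Q_i$ of $q_i$ rows produced in Step~3 (these exist by the counting argument in the statement of Algorithm~\ref{NewAlg}). On the rows in $Q_i$ the sampled vector $\Phi m$ receives contributions only from columns at point $i$: columns at points of $V'\setminus\{i\}$ vanish on $Q_i$ by construction, and columns at points outside $V'$ are multiplied by zero entries of $m$. Since $m$ restricted to point $i$ is supported on the $q_i$ coordinates indexed by $S$, the restriction of $\Phi m$ to $Q_i$ is the image of that $q_i\times 1$ sub-vector of $m$ under a $q_i\times q_i$ submatrix of $\tfrac{1}{\sqrt{r_i}}H_i$; solving this system (Step~4) returns those coordinates, and doing so for every $i\in V'$ reconstructs all of $m$.

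The routine inequalities above are not the difficulty; the point needing care is the \emph{strict} separation of non-zero from zero coordinates, which is why I would use the sharper error bound (dropping $|x|$ from the $\ell_1$ sum) rather than Lemma~\ref{algCoord}(ii) alone --- with only the weaker bound, a non-zero coordinate of magnitude exactly $\tfrac{2}{r}\|m\|_1$ could tie with a zero coordinate and be lost in Step~2. The other delicate point is that the $q_i\times q_i$ systems in Step~4 must be nonsingular, i.e.\ the relevant square submatrices of the $H_i$ must be invertible; I would either impose this as a standing hypothesis on the Hadamard matrices (it holds automatically, for instance, when each $H_i$ is a Fourier matrix of prime order, every square submatrix of which is nonsingular) or secure it by choosing $Q_i$ appropriately in Step~3.
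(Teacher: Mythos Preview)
Your approach is essentially the same as the paper's: both use Lemma~\ref{algCoord} to show that the initial estimate $\hat{m}$ strictly separates non-zero from zero coordinates of $m$, so that $\supp(m)\subseteq S$ and the remaining steps succeed. Your treatment is in fact more careful --- the paper applies Lemma~\ref{algCoord}(ii) directly and writes strict inequalities without justification, whereas you correctly sharpen the error bound for non-zero coordinates (dropping $|x|$ from the $\ell_1$ sum) to secure strictness, and you also flag the invertibility issue in Step~4 that the paper leaves implicit.
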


\begin{proof}
Let $\hat{x}_0$ and $\hat{x}_1$ be the estimates obtained by Algorithm \ref{NewAlg} for two coordinates $x_0$ and $x_1$ of the signal such that $x_0=0$ and $x_1\neq0$. It suffices to show that $|\hat{x}_0| < |\hat{x}_1|$. By Lemma \ref{algCoord}(ii), $|\hat{x}_0|<\frac{1}{r}\|m\|_1$. From our hypotheses $|x_1|\geq\frac{2}{r}\|m\|_1$ and so by Lemma \ref{algCoord}(ii), $|\hat{x}_1|>|x_1|-\frac{1}{r}\|m\|_1>\frac{1}{r}\|m\|_1$. Thus, $|\hat{x}_0|<|\hat{x}_1|$.
\end{proof}

If we assume that the non-zero signal coordinates have uniformly random phase,  we can improve substantially on Corollary \ref{recovCor}. Also, if we assume that the support of the signal is chosen uniformly at random,  we can establish that, for large $n$, the algorithm runs in $O(N\log n)$ time with high probability. We employ a simple consequence of Hoeffding's inequality.

\begin{lemma}\label{HoefLemma}
If $z_1,\ldots,z_n$ are independent complex random variables with uniformly random phase and magnitude at most $1$, then for each positive real number $c$,
$$\mathbb{P}(|z_1+\cdots+z_n| \geq c) \leq 4\exp\left(\tfrac{-c^2}{4n}\right).$$
\end{lemma}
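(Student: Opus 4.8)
The plan is to reduce the complex-valued statement to a pair of real-valued Hoeffding bounds, one for the real part and one for the imaginary part of $z_1+\cdots+z_n$, and then combine them. Write $z_j = X_j + iY_j$. Since each $z_j$ has magnitude at most $1$, both $X_j$ and $Y_j$ lie in $[-1,1]$, so they are bounded random variables to which Hoeffding's inequality applies directly. The uniformly random phase assumption guarantees $\mathbb{E}[z_j]=0$, hence $\mathbb{E}[X_j]=\mathbb{E}[Y_j]=0$; more importantly, conditioning on the magnitude $|z_j|=\rho_j$, the phase is uniform on $[0,2\pi)$, which forces $\mathbb{E}[X_j \mid \rho_j]=\mathbb{E}[Y_j \mid \rho_j]=0$, so $X_1,\ldots,X_n$ are independent mean-zero variables in $[-1,1]$ (and likewise for the $Y_j$). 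Hoeffding then gives $\mathbb{P}(|X_1+\cdots+X_n| \geq a) \leq 2\exp(-a^2/(2n))$ and the same bound for the $Y_j$.

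The second step is the union-bound combination. If $|z_1+\cdots+z_n| \geq c$ then, since $|w| \leq |\mathrm{Re}\,w| + |\mathrm{Im}\,w|$ for any complex $w$ (or, more tightly, $\max\{|\mathrm{Re}\,w|,|\mathrm{Im}\,w|\} \geq |w|/\sqrt{2}$), at least one of $|\sum_j X_j| \geq c/2$ or $|\sum_j Y_j| \geq c/2$ must hold. Applying the two Hoeffding bounds with $a = c/2$ and summing,
\[
\mathbb{P}\!\left(\left|z_1+\cdots+z_n\right| \geq c\right) \leq 2 \cdot 2\exp\!\left(\frac{-(c/2)^2}{2n}\right) = 4\exp\!\left(\frac{-c^2}{8n}\right),
\]
which is in fact stronger than the claimed $4\exp(-c^2/(4n))$, so the stated inequality follows a fortiori. (Using the sharper split $a = c/\sqrt{2}$ would give $4\exp(-c^2/(4n))$ exactly; either route works.)

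The only mild subtlety — and the place I would be most careful — is justifying that the $X_j$ really are independent mean-zero bounded variables when the magnitudes $\rho_j$ are themselves random and possibly dependent on nothing in particular: the clean way is to condition on the entire vector $(\rho_1,\ldots,\rho_n)$, apply Hoeffding to the conditionally independent, conditionally mean-zero, $[-1,1]$-valued $X_j$'s to get the bound for every fixed choice of magnitudes, and then take expectation over the magnitudes, noting the bound does not depend on them. Everything else is a routine invocation of Hoeffding's inequality and the triangle inequality in $\mathbb{C}$, so there is no real obstacle here; the lemma is genuinely a two-line consequence once the conditioning is set up correctly.
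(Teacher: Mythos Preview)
Your proposal is correct and follows essentially the same route as the paper: split into real and imaginary parts, observe that if $|z_1+\cdots+z_n|\ge c$ then one of these parts has magnitude at least $c/\sqrt{2}$, and apply Hoeffding's inequality to each part (mean zero, values in $[-1,1]$) together with a union bound. The paper uses the $c/\sqrt{2}$ split directly to land on $4\exp(-c^2/(4n))$; your extra care with conditioning on the magnitudes is a harmless elaboration of the same argument.
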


\begin{proof}
If $|z_1+\cdots+z_n| \geq c$,  one of the real or imaginary parts of $z_1+\cdots+z_n$ must have magnitude at least $\frac{c}{\sqrt{2}}$. Now apply Hoeffding's inequality separately to the real and imaginary parts of $z_1+\cdots+z_n$, noting that the expected value is zero in each case.
\end{proof}

\begin{theorem}
Let $\Phi$ be an $n \times N$ matrix created via Construction \ref{con0} using a PBD $\mathcal{D}$ all of whose replication numbers are equal to some integer $r$. Suppose that $\Phi$ is used to sample an $r$-sparse signal $m$.
\begin{itemize}
    \item[(i)]
If the non-zero components of $m$ are independent random complex variables with uniformly random phase and magnitude in the interval $[r^{-\frac{1}{2}+\epsilon},1]$ for a positive constant $\epsilon$, then for large $n$ Algorithm \ref{NewAlg} with $|S| = r$ recovers $m$ with high probability.
    \item[(ii)]
If the support of $m$ is chosen uniformly at random, then for large $n$ the algorithm runs in $O(N\log n)$ time with high probability.
\end{itemize}
\end{theorem}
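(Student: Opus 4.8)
For part (i), the plan is to apply Corollary \ref{recovCor}: it suffices to show that, with high probability, every non-zero coordinate of $m$ has magnitude at least $\frac{2}{r}\|m\|_1$. Since $m$ is $r$-sparse with entries of magnitude at most $1$, we trivially have $\|m\|_1 \leq r$, so $\frac{2}{r}\|m\|_1 \leq 2$. This crude bound is useless, so I would instead control $\|m\|_1$ more carefully. Each $|m_j|$ lies in $[r^{-1/2+\epsilon},1]$, and there are at most $r$ of them, so $\|m\|_1 \leq r$ always; but typically $\|m\|_1 = O(r)$ is not good enough either. The key realisation is that we do not need $\|m\|_1$ small in absolute terms — we need the error term $\frac{1}{r}(h_1 x_1 + \cdots + h_d x_d)$ from Lemma \ref{algCoord}(i) to be small, and here the $h_j$ have uniformly random phase. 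So rather than invoking Corollary \ref{recovCor} directly, I would re-run its argument using Lemma \ref{HoefLemma}: the error term is a sum of $d \leq r$ independent complex variables of magnitude at most $1$ with uniformly random phase (the phases come from the entries of $H_i^\ast A_i$ combined with the random phases of the signal coordinates — one must check these are still uniform and independent, which holds since multiplying a uniform phase by a fixed unit complex number preserves uniformity). By Lemma \ref{HoefLemma} with $c = r^{1/2+\epsilon/2}$, say, the error $\frac{1}{r}|h_1x_1+\cdots+h_dx_d|$ exceeds $r^{-1/2+\epsilon/2}$ with probability at most $4\exp(-\frac{r^{1+\epsilon}}{4r}) = 4\exp(-\frac{r^{\epsilon}}{4})$. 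Taking a union bound over all $N = O(r^3)$ coordinates, the probability of any bad estimate is at most $O(r^3)\exp(-\frac{r^\epsilon}{4}) = o(1)$. On the complementary event, every estimate $\hat{x}$ satisfies $|\hat{x}-x| < r^{-1/2+\epsilon/2} < r^{-1/2+\epsilon} \leq |m_j|$ for every non-zero coordinate, so the $|S| = r$ largest-magnitude coordinates of $\hat{m}$ are exactly the support of $m$, and step 4 of Algorithm \ref{NewAlg} then recovers $m$ exactly by solving small exact linear systems.

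For part (ii), the running time is dominated by: computing the initial estimate (step 1), which costs $O(r)$ per point via the Hadamard transform — or $O(r\log r)$ if $H_i$ is a Fourier matrix and an FFT is used, but even $O(r^2)$ per point gives $O(vr^2) = O(r^4)$ which is too slow, so one should assume fast transforms are available, giving total cost $O(vr\log r) = O(N\log r) = O(N \log n)$ since $N = vr$ and $v,r = O(n)$; selecting the $|S| = r$ largest coordinates (step 2) costs $O(N)$ or $O(N\log N) = O(N\log n)$; and steps 3–4, whose cost depends on the $q_i$. Here the random-support assumption enters: I would argue that with high probability, $q_i \leq \log n$ for every $i \in V'$ (equivalently, no point of $\mathcal{D}$ contains more than $\log n$ of the $r$ randomly-chosen support columns). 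This is a balls-in-bins estimate: each of the $r$ support coordinates independently lands in one of $v$ point-classes, each of size $r$, so the number landing in a fixed class is stochastically dominated by a $\mathrm{Binomial}(r, r/N) = \mathrm{Binomial}(r, 1/v)$ with mean $r/v = o(1)$ (since $v \sim r^2$), and a Chernoff bound gives $\mathbb{P}(q_i \geq \log n) \leq n^{-\omega(1)}$; union bound over $v \leq n$ points. On this event, each linear system in step 4 has size at most $\log n \times \log n$, solvable in $O((\log n)^3) = O(n)$ time (very loosely), and there are at most $r = O(n)$ of them, giving $O(n^2)$ — wait, this needs care. A cleaner accounting: $\sum_i q_i = r$, and locating the sets $Q_i$ (step 3) amounts to scanning the incidence structure restricted to $V'$, which costs $O(|V'| \cdot \max_i r_i) = O(r \cdot r) = O(r^2) = O(N^{2/3})$ or so — subsumed in $O(N\log n)$. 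Solving a $q_i \times q_i$ system costs $O(q_i^3) \leq (\log n)^2 \cdot q_i$, so $\sum_i O(q_i^3) \leq (\log n)^2 \sum_i q_i = O(r (\log n)^2) = O(N^{1/3}(\log n)^2)$, again subsumed. So the total is $O(N\log n)$ with high probability.

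The main obstacle I anticipate is the bookkeeping in step 3 and step 4 of the running-time analysis: one must argue that identifying the uncontaminated rows $Q_i$ can actually be done quickly given the stored data structure (the lists of non-zero rows of each $\Phi_i$), rather than by a naive quadratic scan, and that this interacts correctly with the high-probability bound $q_i \leq \log n$. The probabilistic estimates themselves (Chernoff for the balls-in-bins bound, Hoeffding via Lemma \ref{HoefLemma} for the coordinate errors) are routine; the delicate point is matching the algorithm's actual data-access pattern to the claimed $O(N\log n)$ bound, and being careful that the implicit constants in $N = vr$, $v \sim r^2$, $n \sim r^2$, $N \sim r^3$ (from Remark \ref{growthRemark}) are used consistently so that, e.g., $\log N$, $\log n$, and $\log r$ are interchangeable up to constants.
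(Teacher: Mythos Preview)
Your proposal is correct and follows essentially the same route as the paper's proof: for (i), you bypass Corollary \ref{recovCor} and instead use Lemma \ref{algCoord}(i) directly, bound the error term $\frac{1}{r}\sum h_j x_j$ via Lemma \ref{HoefLemma}, and union-bound over all $N$ coordinates (the paper uses threshold $a/2$ where $a=r^{-1/2+\epsilon}$ rather than your $r^{-1/2+\epsilon/2}$, but this is immaterial); for (ii), you bound each step's cost separately and control $\max_i q_i$ by a balls-in-bins estimate, exactly as the paper does. One small slip to fix: you write $v \sim r^2$, but Remark \ref{growthRemark} gives $n \sim r^2$ while $v$ can be as small as $\sim r$ --- this does not damage your argument (more bins only helps), and indeed the paper invokes the conservative case of $r$ balls in $r$ bins, quoting the classical $\frac{\log r}{\log\log r}$ max-load result rather than a Chernoff bound.
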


\begin{proof}
Recall that $n \sim r^2$ and suppose that $r$ is large.

We first prove (i). For brevity, let $a=r^{-\frac{1}{2}+\epsilon}$. Suppose that the non-zero components of $m$ are independent random complex variables with uniformly random phase and magnitude in the interval $[a,1]$.  Let $x$ be a coordinate of $m$ and let $\hat{x}$ be our estimate for $x$. By Lemma \ref{algCoord}(i), $\hat{x}-x=\frac{1}{r}(h_1x_1+\cdots +h_dx_d)$ where $x_1,\ldots,x_d$ are the non-zero coordinates of $m$ that do not correspond to point $i$ of $\mathcal{D}$ and $h_1,\ldots,h_d$ are complex numbers of magnitude $1$. Because $x_1,\ldots,x_d$ are independent random complex variables with uniformly random phase and magnitude at most 1, so are $h_1x_1,\ldots,h_dx_d$. Thus, by Lemma \ref{HoefLemma},
$$\mathbb{P}\left(|h_1x_1+\cdots +h_dx_d| \geq \tfrac{ar}{2}\right) \leq 4\exp\left(\frac{-a^2r^2}{16d}\right)=4\exp\left(\frac{-r^{1+2\epsilon}}{16d}\right).$$
Thus, using the facts that $d \leq r$ and that $r^{2\epsilon}$ grows faster than $\log N$ (by Remark \ref{growthRemark}, $N=O(r^3)$), it can be seen that $\mathbb{P}(|\hat{x}-x| \geq \tfrac{a}{2}) = o(\frac{1}{N})$. So it follows from the union bound that with high probability it is the case that $|\hat{x}-x| < \frac{a}{2}$ for each coordinate $x$ of $m$. Because $|x| \geq a$ for each non-zero coordinate $x$ of $m$, this implies that our estimates for non-zero coordinates of $m$ have magnitude greater than our estimates for zero coordinates of $m$ with high probability. Then Algorithm \ref{NewAlg} with $|S|=r$ recovers $m$.

We now prove (ii). Note that $|S| = r$, that $rv=N$ and that $r \sim \sqrt{n}$ (see Remark \ref{growthRemark}). The first step of the algorithm is essentially performing a Hadamard transform of order $r$ for each point, and so can be accomplished in $O(vr \log r)=O(N \log n)$ time. The second step is essentially a sorting operation and can be accomplished in $O(N)$ time. The third step can be accomplished by first creating a list of blocks that intersect exactly one point in $V'$ (by examining $r$ blocks for each of the at most $|S|$ points in $V'$) and then partitioning this list into the sets $Q_i$. Because $|S| = r \sim \sqrt{n}$, this takes $O(n)$ time. The final step involves inverting a $q_i \times q_i$ matrix for each $i \in V'$.

It is known that if $r$ balls are placed in $r$ bins uniformly at random,  the maximum number of balls in any bin is $\frac{\log r}{\log\log r}(1+o(1))$ with high probability (see \cite{RaabBalls}, for example). In this case $q_i = o(\log r)$ for each point $i$, and hence the inversions take $o(r \log^3 r)=o(n)$ time. Combining these facts, the algorithm runs in $O(N\log n)$ time with high probability.
\end{proof}

\section*{Acknowledgements}

The authors acknowledge the support of the Australian Research Council via grant DP120103067.
\bibliographystyle{abbrv}
\flushleft{
\bibliography{NewBiblio}
}

\end{document}